\documentclass[journal]{IEEEtran}
\IEEEoverridecommandlockouts
\usepackage{cite}
\usepackage{algorithm,amsbsy,amsmath,amssymb,bbm,mathrsfs,multirow,amsthm,amsfonts}
\usepackage{array}
\usepackage{graphicx}
\usepackage{graphics}
\usepackage{subfigure}
\usepackage{color}
\usepackage{soul}
\usepackage{bbm}
\usepackage{algorithm}
\usepackage[noend]{algpseudocode}

\algrenewcommand\algorithmicforall{\textbf{foreach}}
\algrenewcommand\algorithmicindent{.8em}
\usepackage{mathtools}
\usepackage{pdfpages}

\usepackage{pifont}
\pagestyle{empty}


\newcommand{\beq}{\begin{equation}}
\newcommand{\eeq}{\end{equation}}
\newcommand{\bitm}{\begin{itemize}}
\newcommand{\ba}{\begin{array}}
\newcommand{\ea}{\end{array}}
\newcommand{\eitm}{\end{itemize}}
\newcommand{\beqn}{\begin{eqnarray}}
\newcommand{\eeqn}{\end{eqnarray}}
\newcommand{\beqno}{\begin{eqnarray*}}
\newcommand{\eeqno}{\end{eqnarray*}}
\newcommand{\bma}{\begin{displaymath}}
\newcommand{\ema}{\end{displaymath}}
\newcommand{\bnu}{\begin{enumerate}}
\newcommand{\enu}{\end{enumerate}}
\newcommand{\bce}{\begin{center}}
\newcommand{\ece}{\end{center}}
\newcommand{\btb}{\begin{tabular}}
\newcommand{\etb}{\end{tabular}}

\newtheorem{lemma}{Lemma}

\hyphenation{lists}


\usepackage{graphicx}
\usepackage{textcomp}
\usepackage{xcolor}
\def\BibTeX{{\rm B\kern-.05em{\sc i\kern-.025em b}\kern-.08em
    T\kern-.1667em\lower.7ex\hbox{E}\kern-.125emX}}

\begin{document}

\title{\huge Secure Distributed RIS-MIMO over Double Scattering Channels: Adversarial Attack, Defense, and SER Improvement
}

\author{ 
{Bui Duc Son}, Gaosheng Zhao,~\IEEEmembership{Graduate Student Member,~IEEE}, Trinh Van Chien,\\ and {Dong In Kim},~\IEEEmembership{Life Fellow,~IEEE}
\vspace{-1cm}

 \thanks{This research was supported in part by the MSIT (Ministry of Science and ICT), Korea, under the ICT Creative Consilience program (IITP-2020-0-01821) supervised by the IITP (Institute for ICT Planning \& Evaluation). (Corresponding author: Dong In Kim). An earlier version of this paper has been submitted in part to IEEE International Conference on Communications (ICC) 2026. }  
\thanks{Bui Duc Son, Gaosheng Zhao, and Dong In Kim are with the Department of Electrical and Computer Engineering, Sungkyunkwan University, Suwon 16419, South Korea. Emails: buiducson@skku.edu, gaosheng@skku.edu, and dongin@skku.edu.}
\thanks{ Trinh Van Chien is with the School of Information and Communications Technology, Hanoi University of Science and Technology, Hanoi 100000, Vietnam. Emails:  chientv@soict.hust.edu.vn.}
}

\maketitle
\begin{abstract} 
There has been a growing trend toward leveraging machine learning (ML) and deep learning (DL) techniques to optimize and enhance the performance of wireless communication systems. However, limited attention has been given to the vulnerabilities of these techniques, particularly in the presence of adversarial attacks.
This paper investigates the adversarial attack and defense in distributed multiple reconfigurable intelligent surfaces (RISs)-aided multiple-input multiple-output (MIMO) communication systems-based autoencoder in finite scattering environments. We present the channel propagation model for distributed multiple RIS, including statistical information driven in closed form for the aggregated channel. The symbol error rate (SER) is selected to evaluate the collaborative dynamics between the distributed RISs and MIMO communication in depth. The relationship between the number of RISs and the SER of the proposed system based on an autoencoder, as well as the impact of adversarial attacks on the system's SER, is analyzed in detail. We also propose a defense mechanism based on adversarial training against the considered attacks to enhance the model's robustness. Numerical results indicate that increasing the number of RISs effectively reduces the system's SER but leads to the adversarial attack-based algorithm becoming more destructive in the white-box attack scenario. The proposed defense method demonstrates strong effectiveness by significantly mitigating the attack's impact. It also substantially reduces the system's SER in the absence of an attack compared to the original model. \textcolor{black}{Moreover, we extend the phenomenon to include decoder mobility, demonstrating that the proposed method maintains robustness under Doppler-induced channel variations.}

\end{abstract}
\begin{IEEEkeywords}  6G, adversarial attacks and defenses, reconfigurable intelligent surfaces (RISs), autoencoders.
\end{IEEEkeywords}

\thispagestyle{empty}
\vspace{-0.5cm}
\section{INTRODUCTION}
\label{sec:introduction}

Recent years have experienced the trend of exploiting machine learning (ML) and deep learning (DL) techniques to optimize and enhance the performance of wireless communication systems. The sixth generation (6G) network, thus, is expected to inherit and further advance the strengths of the previous generation by exploiting ML and DL to fulfill the strict requirements of next-generation networks' applications, such as lower latency, higher throughput, and enhanced reliability compared to previous generations. To achieve the desired quality of service (QoS) and quality of experience (QoE), besides the learning-based methods, numerous cutting-edge technologies are anticipated to be integrated into 6G, including Massive Multiple-Input Multiple-Output (mMIMO), cell-free mMIMO, and millimeter-wave (mmWave) communications \cite{shi2024ris}. However, these advancements must overcome existing hardware limitations and technical challenges. Among these promising technologies, Reconfigurable Intelligent Surfaces (RISs) have appeared as an effective solution to improve the performance of 6G, particularly in the case of non-line-of-sight (NLOS) propagation path between the transmitter and receiver \cite{9475160}. Many studies have shown that RIS-assisted Multiple-Input Multiple-Output (MIMO) explicitly improves communication trustworthiness with modulated signals \cite{tang2020mimo,khoshafa2024ris}.

In wireless communication systems, assuming that all electronic components operate flawlessly, the signal modulation process becomes the primary source of error. To address this issue, end-to-end learning-based communication has emerged as a promising solution. Autoencoder-based ML techniques can enhance the physical layer security by encoding signals before transmission. Beyond learning encoding and decoding strategies, these models can also learn channel characteristics, enabling improved system performance under imperfect channel state information compared to traditional signal processing methods \cite{9360873}. Therefore, securing RIS-assisted MIMO autoencoder systems has become an important research direction, particularly for ensuring both security and reliability in AI-driven wireless communication networks \cite{zhao2024generative}.
\begin{table*}[t]
\caption{Comparative Analysis of Related Works on Adversarial Attacks and Defense Strategies in Wireless Communication Systems}
\centering
\begin{tabular}{lllllllll}
\hline
\multicolumn{1}{|l|}{Reference} & \multicolumn{1}{l|}{Year} & \multicolumn{1}{l|}{MIMO} & \multicolumn{1}{l|}{Fading} & \multicolumn{1}{l|}{Multiple attacks} & \multicolumn{1}{l|}{Defense} & \multicolumn{1}{l|}{\begin{tabular}[c]{@{}l@{}}Defense improved \\ system performance\end{tabular}} & \multicolumn{1}{l|}{\begin{tabular}[c]{@{}l@{}}\textcolor{black}{Mobility} \\ \textcolor{black}{considered}\end{tabular}} & \multicolumn{1}{l|}{\begin{tabular}[c]{@{}l@{}}Multiple \\ distributed RIS\end{tabular}} \\ \hline
\multicolumn{1}{|l|}{\cite{8651357}}         & \multicolumn{1}{l|}{2019} & \multicolumn{1}{l|}{\ding{55}} & \multicolumn{1}{l|}{No} & \multicolumn{1}{l|}{\ding{55}} & \multicolumn{1}{l|}{\ding{55}} & \multicolumn{1}{l|}{\ding{55}} & \multicolumn{1}{l|}{\ding{55}} & \multicolumn{1}{l|}{\ding{55}} \\ \hline
\multicolumn{1}{|l|}{\cite{8449065}}         & \multicolumn{1}{l|}{2019} & \multicolumn{1}{l|}{\ding{55}} & \multicolumn{1}{l|}{No} & \multicolumn{1}{l|}{\ding{55}} & \multicolumn{1}{l|}{\ding{55}} & \multicolumn{1}{l|}{\ding{55}} & \multicolumn{1}{l|}{\ding{55}} & \multicolumn{1}{l|}{\ding{55}} \\ \hline
\multicolumn{1}{|l|}{\cite{9838452}}         & \multicolumn{1}{l|}{2022} & \multicolumn{1}{l|}{\ding{55}} & \multicolumn{1}{l|}{No} & \multicolumn{1}{l|}{\ding{51}} & \multicolumn{1}{l|}{\ding{51}} & \multicolumn{1}{l|}{\ding{55}} & \multicolumn{1}{l|}{\ding{55}} & \multicolumn{1}{l|}{\ding{55}} \\ \hline
\multicolumn{1}{|l|}{\cite{zheng2023designing}} & \multicolumn{1}{l|}{2023} & \multicolumn{1}{l|}{\ding{51}} & \multicolumn{1}{l|}{No} & \multicolumn{1}{l|}{\ding{51}} & \multicolumn{1}{l|}{\ding{55}} & \multicolumn{1}{l|}{\ding{55}} & \multicolumn{1}{l|}{\ding{55}} & \multicolumn{1}{l|}{\ding{55}} \\ \hline
\multicolumn{1}{|l|}{\cite{nan2023physical}} & \multicolumn{1}{l|}{2023} & \multicolumn{1}{l|}{\ding{55}} & \multicolumn{1}{l|}{Infinite} & \multicolumn{1}{l|}{\ding{51}} & \multicolumn{1}{l|}{\ding{51}} & \multicolumn{1}{l|}{\ding{55}} & \multicolumn{1}{l|}{\ding{55}} & \multicolumn{1}{l|}{\ding{55}} \\ \hline
\multicolumn{1}{|l|}{\cite{10138665}}       & \multicolumn{1}{l|}{2023} & \multicolumn{1}{l|}{\ding{55}} & \multicolumn{1}{l|}{No} & \multicolumn{1}{l|}{\ding{51}} & \multicolumn{1}{l|}{\ding{51}} & \multicolumn{1}{l|}{\ding{55}} & \multicolumn{1}{l|}{\ding{55}} & \multicolumn{1}{l|}{\ding{55}} \\ \hline
\multicolumn{1}{|l|}{\cite{10402044}}       & \multicolumn{1}{l|}{2024} & \multicolumn{1}{l|}{\ding{55}} & \multicolumn{1}{l|}{Infinite} & \multicolumn{1}{l|}{\ding{51}} & \multicolumn{1}{l|}{\ding{51}} & \multicolumn{1}{l|}{\ding{55}} & \multicolumn{1}{l|}{\ding{55}} & \multicolumn{1}{l|}{\ding{55}} \\ \hline
\multicolumn{1}{|l|}{\cite{10416752}}       & \multicolumn{1}{l|}{2024} & \multicolumn{1}{l|}{\ding{55} (MISO)} & \multicolumn{1}{l|}{Infinite} & \multicolumn{1}{l|}{\ding{51}} & \multicolumn{1}{l|}{\ding{55}} & \multicolumn{1}{l|}{\ding{55}} & \multicolumn{1}{l|}{\ding{55}} & \multicolumn{1}{l|}{\ding{55}} \\ \hline
\multicolumn{1}{|l|}{\cite{10742079}}       & \multicolumn{1}{l|}{2024} & \multicolumn{1}{l|}{\ding{51}} & \multicolumn{1}{l|}{Finite} & \multicolumn{1}{l|}{\ding{51}} & \multicolumn{1}{l|}{\ding{55}} & \multicolumn{1}{l|}{\ding{55}} & \multicolumn{1}{l|}{\ding{55}} & \multicolumn{1}{l|}{\ding{55} (Only Double RIS)} \\ \hline
\multicolumn{1}{|l|}{\cite{10830521}}       & \multicolumn{1}{l|}{2025} & \multicolumn{1}{l|}{\ding{55}} & \multicolumn{1}{l|}{Infinite} & \multicolumn{1}{l|}{\ding{55}} & \multicolumn{1}{l|}{\ding{51}} & \multicolumn{1}{l|}{\ding{55}} & \multicolumn{1}{l|}{\ding{55}} & \multicolumn{1}{l|}{\ding{55}} \\ \hline
\multicolumn{1}{|l|}{\cite{10436107}}       & \multicolumn{1}{l|}{2025} & \multicolumn{1}{l|}{\ding{55}} & \multicolumn{1}{l|}{Infinite} & \multicolumn{1}{l|}{\ding{51}} & \multicolumn{1}{l|}{\ding{51}} & \multicolumn{1}{l|}{\ding{55}} & \multicolumn{1}{l|}{\ding{55}} & \multicolumn{1}{l|}{\ding{55}} \\ \hline
\multicolumn{1}{|l|}{Ours}                 & \multicolumn{1}{l|}{2025} & \multicolumn{1}{l|}{\ding{51}} & \multicolumn{1}{l|}{Finite \& Infinite} & \multicolumn{1}{l|}{\ding{51}} & \multicolumn{1}{l|}{\ding{51}} & \multicolumn{1}{l|}{\ding{51}} & \multicolumn{1}{l|}{\ding{51}} & \multicolumn{1}{l|}{\ding{51}} \\ \hline
\end{tabular}
\label{comparingstuides}
\end{table*}

\subsection{Related Works}
Numerous studies have investigated optimal Reconfigurable Intelligent Surface (RIS) deployment architectures to maximize the benefits of passive beamforming in wireless networks. Early work focused on the point-to-point scenario, showing that placing a single RIS near either the transmitter or the receiver can effectively overcome blockage and extend coverage range \cite{wu2021intelligent}. For multiuser systems, both centralized and distributed deployment strategies have been proposed: a large, cell-edge RIS can jointly serve multiple users through coherent phase alignment \cite{pan2020multicell}, while numerous smaller RIS panels distributed across the cell improve spatial diversity and reduce end-to-end path loss \cite{zhang2021intelligent}. More recently, multi-reflection architectures have emerged, in which two or more RISs are deployed in series, typically one adjacent to the base station and another near the user, to achieve ultra-high beamforming gains in environments with severe blockages \cite{mei2022intelligent}. These representative deployment scenarios, including single RIS placement, centralized versus distributed multiuser RIS deployments, and multi-reflection RIS architectures, illustrate the key trade-offs between coverage, capacity, and deployment complexity that guide practical RIS integration in emerging 6G MIMO systems.

Beyond deployment strategies, RIS-aided communication systems have been studied extensively, such as transmit power minimization \cite{yaswanth2023robust}, the maximization of the minimum signal-to-interference plus noise ratio (SINR) \cite{papazafeiropoulos2023max}, sum-rate maximization \cite{zhang2022sum}, and the minimization of mean square error (MSE) \cite{choi2024wmmse}. Nevertheless, optimizing the phase shift of RISs is a non-convex problem, posing significant challenges for traditional optimization techniques \cite{zhang2025sum}. Thanks to ML and DL, we now have promising alternative approaches for solving such complex problems. In particular, autoencoders, a specialized class of neural networks, have attracted growing interest among researchers for their potential to optimize RIS-assisted communication systems effectively. For example, the authors in \cite{9745781} proposed an end-to-end learning framework to enhance the communication reliability of RIS-assisted MIMO systems based on deterministic channel realization. In the context of  MIMO-orthogonal frequency-division multiplexing (OFDM) in mm-Wave, the convolutional denoising autoencoder model is exploited for channel prediction by the authors in \cite{10120965}. Additionally, M. H. Wu et al. \cite{wu2025intelligent} proposed a variational denoising autoencoder model with cross-attention precoding (DVAE-CATT-Precoding) to address subcarrier channel interference. The numerical results indicated that DVAE-CATT-Precoding improves MSE, achievable rate, generalizability, and robustness compared to earlier research. Despite these advances, the number of studies related to the implementation of distributed RIS-MIMO systems is currently limited because acquiring CSI for all links is highly challenging and often impractical, and ML models may struggle to learn optimal solutions effectively while a large number of parameters need to be optimized. Notable efforts in this research field are \cite{9977540} and \cite{10136735} for single and double RIS configurations, respectively.
\begin{table}[ht]
\centering
\caption{Summary of Notations}
\label{tab:notations}
\begin{tabular}{|l|l|}
\hline
\textbf{Symbol} & \textbf{Description} \\
\hline
$\mathbf{A}, \mathbf{a}$ & Matrix and vector (bold uppercase and lowercase) \\
$(\cdot)^T$ & Transpose of a matrix or vector \\
$(\cdot)^H$ & Hermitian (conjugate transpose) \\
$(\cdot)^*$ & Conjugate of a complex scalar or vector \\
$\mathbf{B}_N$ & Identity matrix of size $N \times N$ \\
$\text{diag}(x_1, \dots, x_N)$ & Diagonal matrix with elements $x_1$ to $x_N$ \\
$\mathbf{A}^{-1}$ & Inverse of matrix $\mathbf{A}$ \\
$\mathbb{F}$ & Generic field (e.g., $\mathbb{R}$ or $\mathbb{C}$) \\
$\|\cdot\|_2$ & Euclidean norm of a vector \\
$\|\cdot\|_F$ & Frobenius norm of a matrix \\
$\mathbb{E}\{\cdot\}$ & Expectation operator \\
$\otimes$ & Kronecker product \\
$\mathcal{CN}(\cdot, \cdot)$ & Circularly symmetric complex Gaussian distribution \\
$\mathcal{O}(\cdot)$ & Big-$\mathcal{O}$ notation (computational complexity) \\
\hline
\end{tabular}
\end{table}
Nevertheless, although previously mentioned papers focus on enhancing the systems' performance by exploiting ML and DL, they do not examine these models' weaknesses under the effect of adversarial attacks, which are designed to fool ML and DL. To address this concern, the authors in \cite{10460991} investigated the potential for exploiting adversarial attack approaches to degrade the performance of the 6G-IoT systems. Adversarial attack strategies can be classified into three types based on the adversary's level of knowledge about the victim model: black-box (no knowledge), gray-box (limited knowledge), and white-box (full knowledge). Among these, the white-box is the worst scenario and serves as a lower-bound performance for legitimate systems because the adversary finds it difficult to obtain that information in practice. For example, in \cite{8651357}, the authors exploited Additive white Gaussian noise (AWGN) channels and the simple configuration of the autoencoder to examine the destructive abilities of adversarial attacks in end-to-end communication systems as the signal-to-noise ratio (SNR) varies. The numerical results showed that the fast gradient method (FGM) is more effective than a jamming attack in terms of maximizing the target model's bit error rate (BER). To handle the limitation of this work, Son et al. \cite{10742079} proposed a method based on projected gradient descent (PGD), enhanced from FGM, to maximize the symbol error rate (SER) of the proposed autoencoder in infinite scattering environments. The results showed that PGD effectively increases the autoencoder's SER compared to FGM and jamming attacks. In another study, the authors in \cite{9838452} introduced a generative adversarial network (GAN)-based framework method where the adversary is considered as a generative network and the autoencoder-based system is trained to defend against it via a minimax optimization game. However, this paper only used the model in \cite{8651357}, which had only a single antenna for both transceivers and used simple AWGN channels, leading to a lack of practical application. Motivated by this gap, we conduct a comprehensive investigation of the vulnerabilities and defense strategy for distributed RIS-assisted MIMO systems under adversarial attacks. 

\begin{figure}[ht]
\centering
\includegraphics[width=0.9\linewidth]{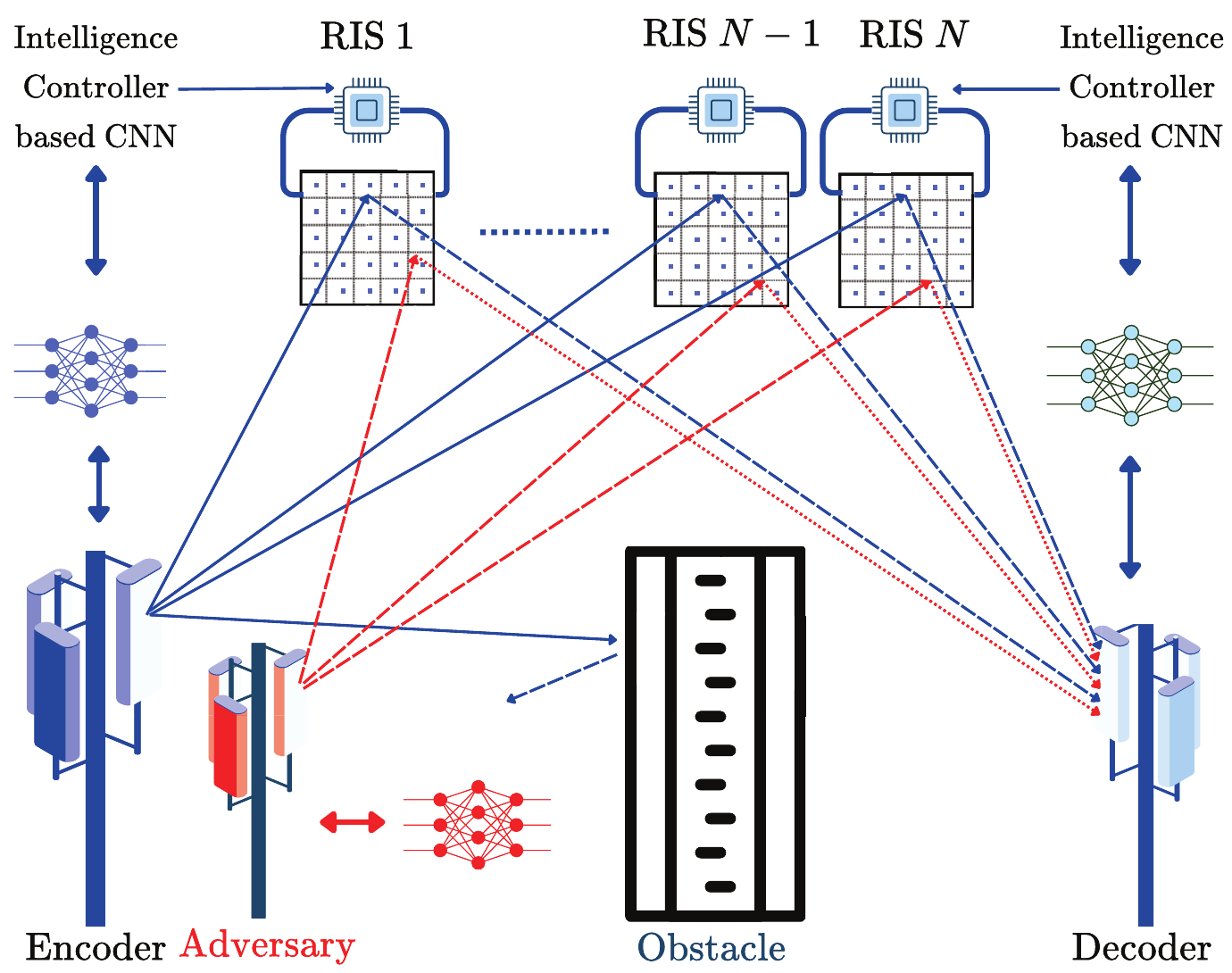}
 \caption{Adversarial attack on distributed RIS-aided MIMO autoencoder.}
  \label{all}
\end{figure}
\subsection{Main Contributions}
This paper investigates the SER of the multiple distributed RIS-assisted MIMO system in a harsh environment by considering the correlation between the system's SER and the number of RISs. Regarding the robustness analysis, we examine the adversarial attack when both the transceiver and adversary have multiple antennas under imperfect channels, instead of a single antenna and AWGN channels \cite{8651357,8449065,10416752,9838452}. Furthermore, we first investigate the novel capability of the adversarial training-based method, which not only mitigates the impact of adversarial attacks but also improves model reliability in terms of the SER. To highlight the research gap and position our contributions, Table~\ref{comparingstuides} summarizes key prior studies on adversarial attacks and defenses in wireless communication and our main contributions are as follows:
\begin{itemize}
    \item \textcolor{black}{We extend the concept of double RIS to a generalized multiple distributed RIS configuration under the double-scattering channel model, in contrast to existing works that mainly consider single or double RIS under AWGN channels or infinite-scattering assumptions. This provides a more realistic framework for analyzing distributed RIS-aided MIMO systems.}


    \item  \textcolor{black}{We analyze the impact of adversarial attacks on distributed RIS-assisted MIMO systems in practical fading environments. Our results highlight that while increasing the number of RISs improves the baseline SER, it simultaneously amplifies the vulnerability to adversarial perturbations, an aspect not captured in prior works under idealized assumptions.}

    \item   \textcolor{black}{To the best of our knowledge, this is the first study to propose an adversarial training-based defense mechanism for distributed RIS-MIMO autoencoders under the finite scattering environment. The method not only mitigates the effect of attacks but also enhances SER performance even in the absence of attacks, showing strong practical value.}

    \item Numerical results show that the quadruple RIS configuration achieves an SER of $10^{-5}$ at 10~[dB], outperforming the double RIS (12~[dB]) and single RIS (18~[dB]) setups. \textcolor{black}{The proposed attack achieves SER degradation for the quadruple-RIS system with reduced complexity compared to \cite{10742079}.} The proposed defense strategy effectively mitigates adversarial attacks and improves the system's reliability. 
    \item \textcolor{black}{We further investigate the system performance under decoder mobility by modeling Doppler-induced channel dynamics. \textcolor{black}{ The results demonstrate that the proposed defense maintains robustness in both static and mobile scenarios, thereby underscoring its practicality for 6G deployments subject to realistic fading and security threats.}}
\end{itemize}


For clarity and ease of reference, the key notations used throughout this paper are summarized in Table~\ref{tab:notations}.
The remainder of this paper is organized as follows: Section~\ref{sec:SYSTEM MODEL} presents the system model, including the double-scattering channel model and the autoencoder architecture. Next, Section~\ref{Universal adversarial attack and defense} introduces the universal adversarial attack and the proposed defense mechanism. After that, Section~\ref{Numerical results} discusses the simulation results. Finally, Section~\ref{conclusion} concludes the paper with a summary of the main findings.

\section{SYSTEM MODEL}
\label{sec:SYSTEM MODEL}
In this section, we examine multiple RIS-Assisted MIMO systems consisting of an encoder, a decoder, and $N$ RISs. The encoder and the decoder are equipped with $K_e$ and $K_d$, while the $m^{th}$ RIS has $N_{m}$ passive reflecting elements. Additionally, the uniform linear array (ULA) configuration is exploited to arrange the antenna arrays of the encoder and decoder, whereas passive reflecting elements are arranged in the uniform planar array (UPA) configuration. The phase shift matrix of $m^{th}$ RIS are formulated as
\begin{equation}
\begin{aligned}
    \mathbf{\Phi}_{m} &= \mathrm{diag}( \beta_{m,1}e^{j\theta_{m,1}}, \ldots,  \beta_{m,N_{i}}e^{j\theta_{m,N_{i}}})\\
                      &= \mathrm{diag}(\mathbf{c}_m^*), \quad \quad m = 1,2,\ldots,N,
\end{aligned} 
\end{equation}
in which, $\mathbf{c}_m^* = (\beta_{m,1}e^{j\theta_{m,1}}, \ldots,  \beta_{m,N_{i}}e^{j\theta_{m,N_{i}}})$; $\theta_{m,i}$ $(\pi \geq \theta_{m,i} \geq -\pi)$ and $\beta_{m,i}$ $(1 \geq \beta_{m,i} \geq 0)$ are the phase and the magnitude of $i^{th}$ passive reflecting element of $m^{th}$ RIS. We assume that one is the unit signal reflection ($\beta_{m,i} = 1,\ \forall m,i$) because of the recent advancements in lossless meta-surface \cite{8811733}. We also consider the harsh propagation scenario in which there is no existing direct link between the encoder and the decoder due to something such as barriers or obstacles blocking it, thereby emphasizing the role and impact of multiple RISs in enhancing the reliability of the considered system. The encoder exploits $M$-Quadrature Amplitude Modulation ($M$-QAM) to modulate the signal $\mathbf{s}$ with the expectation of $\mathbf{ss}^H$ equal to identical matrix $\mathbf{B}_{K_{s}}$, i.e., $\mathbb{E}\{\mathbf{ss}^H \} = \mathbf{B}_{K_{s}}$, where $K_{s}$ is the number of transmitting data streams from the encoder. The active beamforming matrices (linear precoder) $\mathbf{B} \in \mathbb{C}^{K_e \times K_s}$, which satisfy $||\mathbf{B}||^2_F = K_s$, is used to \textcolor{black}{signal} precoding. The received signal at the decoder, denoted by $\mathbf{y} \in \mathbb{C}^{K_d}$, is
\begin{equation}
    \mathbf{y} =\sqrt{\frac{P}{K_s}} \sum\nolimits_{m=1}^{N}(\mathbf{H}_m\mathbf{\Phi}_m\mathbf{D}_m)\mathbf{Bs} + \mathbf{n},
\end{equation}
where $\mathbf{D}_m \in \mathbb{C}^{K_e \times N_m}$ denotes the channel link between the encoder and $m^{th}$ RIS; $\mathbf{H}_m \in \mathbb{C}^{N_m \times K_d}$ is the channel link between the  $m^{th}$ RIS and the decoder; $P$ represents total power required for transmitting \textcolor{black}{signals}; $ \mathbf{n} \sim \mathcal{CN}(0, \sigma^2\mathbf{B}_{K_d})$ is AWGN with \textcolor{black}{mean zero} and variance $\sigma^2\mathbf{B}_{K_d}$. We \textcolor{black}{assume} that there is no channel link between RISs\footnote{As the distance between RISs increases, the power of double-reflection or more (e.g., encoder $\to$ RIS1 $\to$ RIS2 $\to$ decoder) becomes significantly weaker compared to that of single-reflection channels (e.g., encoder $\to$ RIS $\to$ decoder).} to reduce the complexity of the model, and both the encoder and decoder are known for all channel information. The signal received at the decoder is processed further using a combiner matrix as 
\begin{equation}
    \textcolor{black}{\mathbf{z}}=\mathbf{Wy}=\sqrt{\frac{P}{K_s}}\mathbf{WUBs} + \mathbf{Wn},
\end{equation}
where $\mathbf{W} \in \mathbb{C}^{K_s \times K_d}$ is the combiner matrix $(||\mathbf{W}||^2_F = K_s)$ and  $\mathbf{U}$ presents the aggregated channel expressed as
\begin{equation}
    \mathbf{U} =\sum\nolimits_{m=1}^{N} \mathbf{H}_m\mathbf{\Phi}_m\mathbf{D}_m.
    \label{aggregated_channel}
\end{equation}
\begin{figure}[t]
\includegraphics[width=0.9\linewidth]{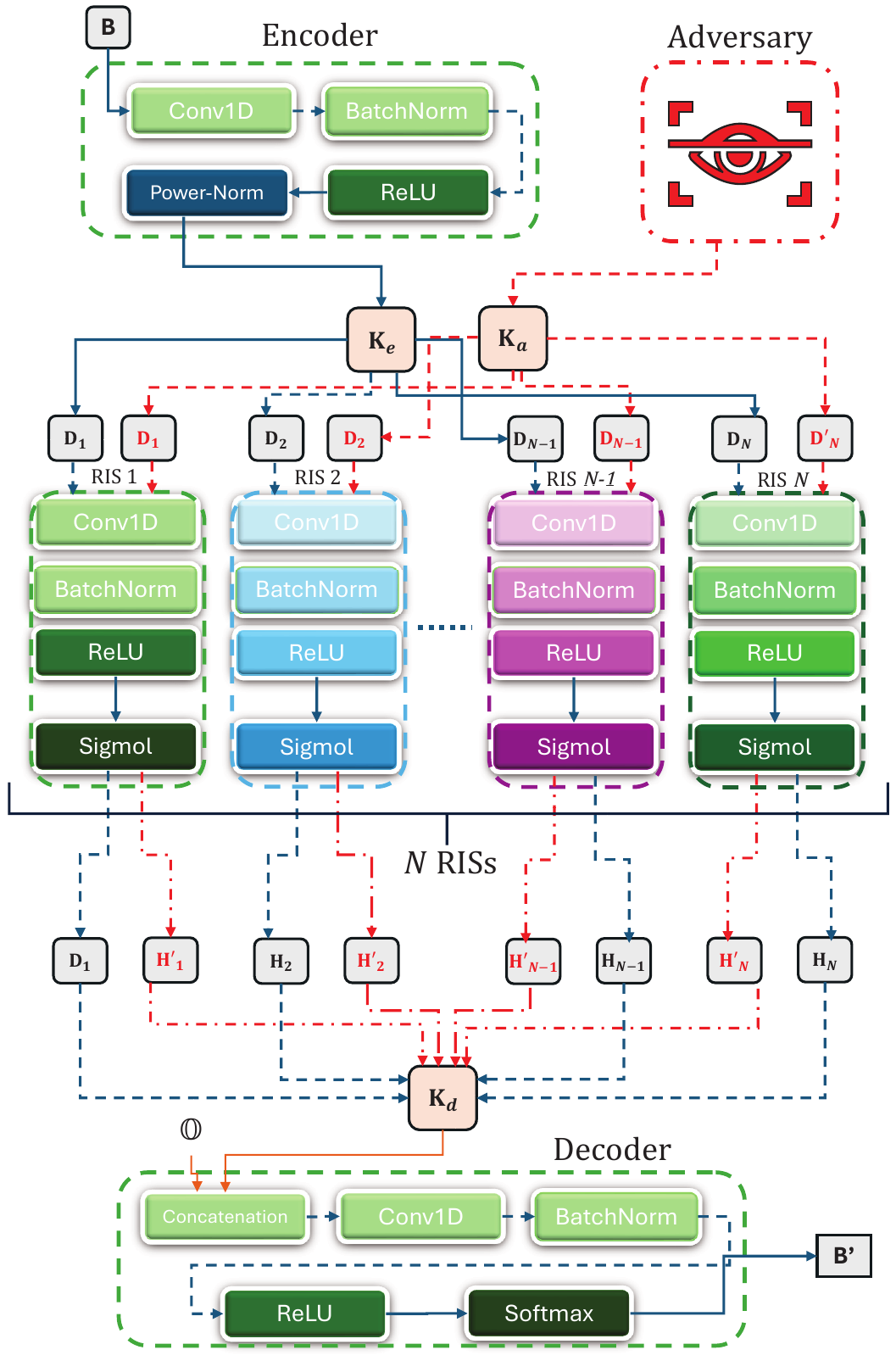}
 \caption{The proposed end-to-end learning framework employs a one-dimensional convolutional neural network (1D-CNN) architecture to model each component of the proposed system.}
  \label{map}
\end{figure}
\vspace{-0.5cm}
\subsection{Double-Scattering Channel Model}
\textcolor{black}{Unlike conventional uncorrelated Rayleigh or Kronecker-based correlated fading models, the double scattering channel model captures three essential features of practical propagation: (i) rank deficiency under poor scattering, (ii) spatial correlation at both transceivers and scatterers, and (iii) sensitivity to the number and structure of scatterers \cite{10136735,9531522}. This channel model spans from rich Rayleigh scattering to keyhole-like channels, making it particularly relevant for RIS-assisted MIMO, where blockages, NLOS conditions, and finite scatterers occur. Adopting this model allows our adversarial attack and defense analysis to reflect realistic vulnerabilities and ensures practical relevance for future 6G deployments\footnote{\textcolor{black}{The impact of the number of scatterers has been analyzed in prior works (e.g., \cite{10136735}, \cite{9531522}). In the low-SNR regime, capacity scales with the received energy rather than with diversity, so a smaller number of scatterers concentrates energy along keyhole-like paths and reduces the SER. In the high-SNR regime, diversity becomes dominant, and a larger number of scatterers is beneficial, whereas a smaller number leads to rank deficiency.}}}. Hence, we now examine the MIMO systems where all the channels are quasi-static in coherence blocks and flat across the bandwidth. The channel links, indicated as $\mathbf{L} \in\{ \mathbf{D}_m,\mathbf{H}_m \}, \forall m = 1,2, \ldots, N$ are modeled using Rician fading, expressed as
\begin{equation}
    \mathbf{L} = \sqrt{\omega}\left( \sqrt{\frac{\upsilon}{\upsilon+1}}\mathbf{\overline{L}} +  \sqrt{\frac{1}{\upsilon+1}}\mathbf{\hat{L}}\right),
    \label{5}
\end{equation}
where $\mathbf{\overline{L}}$ and $\mathbf{\hat{L}}$ denote the deterministic line-of-sight (LOS) and NLOS channels; \textcolor{black}{$\upsilon \in \{\epsilon_m,\delta_m\}$ and $\omega \in \{\alpha_m,\gamma_m\}$ present the Rician factors and the distance-dependent large-scale path-loss coefficients for the channel links $\mathbf{L}$. Specifically, $\epsilon_m$ and $\delta_m$ are the  Rician factors of $\mathbf{D}_m$ and $\mathbf{H}_m$, respectively, while $\alpha_m$ and $\gamma_m$ represent their corresponding path-loss exponents.} Then, the deterministic component $\mathbf{\overline{L}}$ is indicated as the product of UPA and ULA response vectors, defined as
\begin{align}
    &\mathbf{a}_{\mathbf{D}_m}(\theta)=[1, e^{j 2\pi \frac{d_l}{\lambda}\cos\left(\frac{\pi}{2}-\theta\right)},\ldots,e^{j 2\pi (K_e -1) \frac{d_l}{\lambda}\cos\left(\frac{\pi}{2}-\theta\right)}]^T, \label{eq6}\\
    &\mathbf{a}_{\mathbf{H}_m}(\theta)=[1, e^{j 2\pi \frac{d_l}{\lambda}\cos\left(\frac{\pi}{2}-\theta\right)},\ldots,e^{j 2\pi (K_d-1) \frac{d_l}{\lambda}\cos\left(\frac{\pi}{2}-\theta\right)}]^T,
\end{align}
where $\lambda$ is the signal wavelength, $d_l$ denotes the antenna spacing, and $\theta$ represents angle-of-departure (AoD) or angle-of-arrival (AoA), depending on the channel. The index $m = 1, 2, \ldots, N$ corresponds to the $m^{th}$ RIS. We define the array response vectors of UPA along horizontal and vertical axes as
\begin{align}
    &\mathbf{a}_{hm}(\theta) = [1, \ldots, e^{j 2\pi \frac{d_h}{\lambda}(N_{hm}-1)\cos(\frac{\pi}{2}-\theta)}]^T,\\
    &\mathbf{a}_{vm}(\theta,\phi) = [1, \ldots, e^{j 2\pi \frac{d_v}{\lambda}(N_{vm}-1)\cos(\theta)\cos(\frac{\pi}{2}-\phi)}]^T, \label{eq9}
\end{align}
where $d_h$ and $d_v$ represent the distances between two adjacent RIS reflecting elements along the horizontal and vertical axes, respectively; $\theta$ and $\phi$ are the elevation AoA/AoD and azimuth AoA/AoD \cite{mailloux2017phased}; and $N_m = N_{vm} \times N_{hm}$ is the total number of reflecting elements at the $m^{th}$ RIS.  The overall UPA response vector for the $m^{th}$ RIS is then given by
\begin{equation}
    \mathbf{a}_m(\theta, \phi) = \mathbf{a}_{vm}(\theta,\phi) \otimes \mathbf{a}_{hm}(\theta), \quad m = 1,2, \ldots, N.
\end{equation}
Based on \eqref{eq6} - \eqref{eq9}, the LOS components of the channel links can be computed as follows
\begin{align}
    &\mathbf{\overline{D}}_m = \mathbf{a}_m({\theta^A_{Em},\phi^A_{Em}})\mathbf{a}_{\mathbf{D}_m}({\theta^D_{Em}})^T \in \mathbb{C}^{N_m \times K_e},\\
    &\mathbf{\overline{H}}_m = \mathbf{a}_{\mathbf{H}_m}({\theta^A_{Dm}})\mathbf{a}_m({\theta^D_{Dm},\phi^D_{Dm}})^T \in \mathbb{C}^{K_d \times N_m},
    \label{12}
\end{align}
where the superscripts $A$ and $D$ denote the AoA and AoD, respectively, while $\theta$ and $\phi$ represent the elevation and azimuth angles.
Besides, we define $M_1, M_2 \in \{K_e, K_d, N_m\}$, hence, matrix $\mathbf{\hat{L}} \in \mathbb{C}^{M_1\times M_2}$ is the NLOS component of the channels. Let denote $ \mathbf{R}_{e,\mathbf{L}} \in \mathbb{C}^{M_1 \times M_1}, \mathbf{S}_{\mathbf{L}} \in \mathbb{C}^{\mathrm{SC}_\mathbf{L} \times \mathrm{SC}_\mathbf{L}}$, and $\mathbf{R}_{d,\mathbf{L}}\in \mathbb{C}^{M_2 \times M_2}$ are encoder, scatterer, and decoder correlation matrices for channel link $\mathbf{L}$, while $\mathbf{T}_{\mathbf{L}}\in \mathbb{C}^{M_1 \times \mathrm{SC}_{\mathbf{L}}}$, $ \mathbf{E}_{\mathbf{L}} \in \mathbb{C}^{\mathrm{SC}_{\mathbf{L}}\times M_2}$, and $\mathrm{SC}_\mathbf{L}$ are the small-scale fading between the encoder and scatters, small-scale fading between \textcolor{black}{scatters and the decoder}, and the number of scatterers in the propagation channel link in $\textcolor{black}{\mathbf{L}} \in \{\mathbf{D}_m,\mathbf{H}_m\}$. Accordingly, the NLOS component $\mathbf{\hat{L}}$ is computed as
\begin{equation}
    \mathbf{\hat{L}} = \sqrt{\frac{1}{\mathrm{SC}_\mathbf{L}}}\mathbf{R}^{0.5}_{e,\mathbf{L}}\mathbf{T}_{\mathbf{L}}\mathbf{S}^{0.5}_{\mathbf{L}}\mathbf{E}_{\mathbf{L}}\mathbf{R}^{0.5}_{d,\mathbf{L}}. \label{13}
\end{equation}
\textcolor{black}{For clarity, equations (\ref{5})-(\ref{12}) present the deterministic array responses of the ULA/UPA structures, while the NLOS term in (\ref{13}) captures scatterer-induced randomness and spatial correlation. Together, these equations characterize the double-scattering 
channel model.} 

The correlation matrices between the decoder and scatterers \textcolor{black}{between scatters and the decoder} are given by the assumption that the scatterers are arranged in the linear array structure \cite{1175470}. Thus, it can be formulated as
\begin{equation}
    [\mathbf{R}_{e,\mathbf{L}}]_{i,j} = \frac{1}{\mathrm{SC}_{\mathbf{L}}}\sum\nolimits_{t=-q }^{q}\exp(j2\pi d_e(i-j)\sin(v_e)).
    \label{14}
\end{equation}
In \textcolor{black}{the above}, $[\mathbf{R}_{e,\textcolor{black}{\mathbf{L}}}]_{i,j}$ is the $(i;j)^{th}$ element of $[\mathbf{R}_{e,\textcolor{black}{\mathbf{L}}}]$; $v_e = \frac{t\psi_e}{1-\mathrm{SC}_{\textcolor{black}{\mathbf{L}}}}$ and $q = 0.5(\mathrm{SC}_{\textcolor{black}{\mathbf{L}}} -1)$; $d_e$ is the antenna spacing of the encoder or decoder; $\psi_e$ and $\mathrm{SC}_{\textcolor{black}{\mathbf{L}}}$ denote the signals' angular spread and the number of scatterers of channel link $\textcolor{black}{\mathbf{L}\in \{\mathbf{D}_m,\mathbf{H}_m\}}$, respectively. Furthermore, the correlation matrices between the RISs and scatterers align with the spatial correlation matrices of a planar antenna array. The correlation matrices between $m^{th}$ RIS and the scatterers $\mathbf{R}_m \in \{ \mathbf{R}_{\textcolor{black}{e},{\mathbf{D}_m}},\mathbf{R}_{\textcolor{black}{d},{\mathbf{H}_m}}\}$  is formulated as
\begin{equation}
    \mathbf{R}_m = \mathbf{R}_{vm}  \otimes \mathbf{R}_{hm}, 
\end{equation}
where $\mathbf{R}_{hm}$ and $\mathbf{R}_{vm}$ are the correlation matrices of RIS along the horizontal and vertical axes, which are computed as
\begin{align}
    &     [\mathbf{R}_{hm}]_{i,j} = \frac{1}{\mathrm{SC}_{\mathbf{L}}}\sum\nolimits_{t=-q^{'} }^{q^{'}}\exp(j2\pi d_h(i-j)\sin(v_h)),\label{16}\\
    &     [\mathbf{R}_{vm}]_{i,j} = \frac{1}{\mathrm{SC}_{\mathbf{L}}}\sum\nolimits_{t=-q^{'} }^{q^{'}}\exp(j2\pi d_v(i-j)\sin(v_v)). \label{17}
\end{align}
In (\ref{16}) and (\ref{17}), $\mathrm{SC}_{\mathbf{L}}$ is the number of scatterers in any channel link of $\mathbf{L}$; $q' = 0.5(\mathrm{SC}_{\mathbf{L}} -1)$; $v_h = v_v = \frac{t\psi_m}{1-\mathrm{SC}_{\mathbf{A}}}$; and $d_h$ and $d_v$ represent the distances between the elements of the RIS in the horizontal and vertical directions, respectively. Finally, we calculate the correlation matrices among the scatterers as
\begin{equation}
[\mathbf{S}_{\mathbf{L}}]_{i,j} = \frac{1}{\mathrm{SC}_{\mathbf{L}}}\sum\nolimits_{t=-q^{'} }^{q^{'}}\exp(j2\pi d_s(i-j)\sin(v_s)),
\label{18}
\end{equation}
where $v_s = \frac{t\psi_s}{1-\mathrm{SC}_{\mathbf{L}}}$, $\psi_s$ is the angular spread, and $d_s$ is the distance between two scatterers. Based on the double-scattering channel characteristics, the following subsection introduces the autoencoder-based CNN architecture designed to optimize end-to-end communication performance. 
\textcolor{black}{In summary, equations (\ref{14})–(\ref{18}) specify the spatial correlation matrices for the 
encoder, decoder, RIS elements, and scatterers. These formulations describe how angular spreads, antenna spacings, and the finite number of scatterers shape the correlation structure of the channel. By incorporating these correlation effects, the double-scattering model provides a more comprehensive and practical description of RIS-assisted MIMO propagation under realistic scattering environments.}
\subsection{Autoencoder  Architecture}
The goal of the autoencoder is to minimize the SER of the detected signals, which can be expressed as
\begin{equation}
    P_S(\mathbf{\Phi}_m,\mathbf{B},\mathbf{s}) = \frac{1}{M}\sum\nolimits_{i=1}^{M}\sum\nolimits_{j=1, j \neq i }^{M}\Pr[\mathbf{s}_i \to \mathbf{s}_j], 
\end{equation}
where $m \in \{1,\ldots,N\}$ and the term $\Pr[\mathbf{s}_i \to \mathbf{s}_j] = Q\left( \sqrt{||\mathbf{WOB}(\mathbf{s}_i \to \mathbf{s}_j)||^2}/(2\sigma^2)\right)$ indicates the probability of symbol $\mathbf{s}_i$ being mistakenly detected as $\mathbf{s}_j$. Here, $Q(\cdot)$ denotes the tail distribution function of the standard normal distribution, which gives the probability that a standard normal random variable exceeds a given value, which is mathematically defined as $Q(x) = \frac{1}{\sqrt{2\pi}} \int_x^{\infty} e^{-\frac{t^2}{2}} dt.$ Hence, the SER minimization problem is defined as
\begin{align}
    (\mathrm{P_1}): \underset{ \mathbf{\Phi}_m,\mathbf{B},\mathbf{s}}{\text{minimize }} &   P_S(\mathbf{\Phi}_m,\mathbf{B},\mathbf{s})\\
     \text{subject to } &  \|\mathbf{B}\|^2_F = K_s,\\
       & |\Phi_m(n)| = 1, \forall m =1,\ldots,N, n = 1, \dots N_m .
    \label{contrain}
\end{align}
Evidently, (P1) is a non-convex problem due to the strong interdependencies among the involved variables. Furthermore, solving (P1) can be quite complex, particularly when dealing with higher orders of the modulation scheme. Consequently, an autoencoder-based machine learning approach is well-suited for addressing this problem. Typically, a one-dimensional convolutional neural network (1D-CNN) is employed to control the precoding matrices jointly and the phase shifts of the RISs\footnote{\textcolor{black}{The autoencoder is trained using an unsupervised learning approach, leveraging transmitted data and CSI to optimize system parameters, including the phase shift coefficients. These coefficients are intelligently predicted by a CNN module and subsequently forwarded to the RISs for real-time configuration. This methodology enables the RISs to operate in a passive manner, thereby maintaining low power consumption and enhancing their feasibility for practical deployment in next-generation wireless networks.}}. One of the most significant advantages is its ability to adapt to different input lengths even after training, which helps reduce the computational complexity of complex systems, such as those involving multiple RIS-assisted MIMO \cite{jiang2019turbo}.
\subsubsection{Encoder-based CNN Architecture}
As illustrated in Fig.~\ref{map}, the encoder is implemented using the 1D-CNN, effectively replacing all conventional components. The input to the encoder consists of a bit sequence, denoted by $\mathbf{b}_i$, which is converted into one-hot vectors of length $M$, corresponding to the modulation order of the $M$-QAM scheme. The length of the transmitting block is denoted by $L_B $, thus, the input of the neural network is $\mathbf{B}_d= [\mathbf{b}_1, \mathbf{b}_2, \ldots, \mathbf{b}_{L_B-1},\mathbf{b}_{L_B} ] \in \mathbb{C}^{M \times L_B}$. Then, it is processed by various one-dimensional convolutions (Conv1D) layers, each followed by a rectified linear unit (ReLU) and one-dimensional batch normalization (1D-BN). To ensure transmission over the double-scattering fading channels, the output signals are normalized by a custom layer named power normalization. We denote the output of the encoder by
\begin{equation}
    \mathbf{T} = \frac{P\mathbf{T'}}{\sqrt{\mathbb{E}[|\mathbf{T'}|^2]}} \in  \mathbb{C}^{K_e \times L_B},
\end{equation}
where $P$ represents the transmit power and $\mathbf{T'}\in\mathbb{C}^{K_e \times L_B}$ is output of the final 1D-CNN layer. Then, $\mathbf{T}$ is transmitted to RISs for the next phase.
\subsubsection{RISs-based CNN Architecture}

In this work, we assume that the double-reflection links between RISs do not exist due to their \textcolor{black}{product path loss, which is negligible.} The received signals at each RIS, thus, are the input of \textcolor{black}{the RIS model}. For $m^{th}$ RIS, the input is defined as $\mathbf{X}_m =[\mathbf{x}_{m1}, \mathbf{x}_{m2}, \ldots, \mathbf{x}_{mL_B}] \in \mathbb{C}^{N_m \times L_B}$, where $\mathbf{x}_{mi}$ is 
\begin{equation}
    \mathbf{x}_{mi} = \mathbf{D}_m^i\mathbf{t}_i, \quad i= 1, \ldots, L_B; m = 1,\ldots, N_m,
\end{equation}
where the superscript $i$ indicates the channel corresponding to the $i^{th}$ symbol.  The real and imaginary parts of the input are separated and reshaped into a tensor of dimensions $2N_m \times L_B$. This tensor is then subsequently processed through multiple 1D-CNN layers, accompanied by 1D-BN and ReLU activation layers. 
The output of the RIS model is the predicted phase shift matrix. Specifically, the predicted phase shift vector at the \(m^{\text{th}}\) RIS is denoted by $\mathbf{\hat{\Theta}}_m = [ \hat{\mathbf{\theta}}_{m1}, \hat{\mathbf{\theta}}_{m2}, \ldots, \hat{\mathbf{\theta}}_{mL_B}] \in \mathbb{C}^{N_m \times L_B}$, where each $\hat{\mathbf{\theta}}_{mi} = \{\hat{\theta}_{m1}^i, \hat{\theta}_{m2}^i, \ldots, \hat{\theta}_{mN_m}^i\}$ corresponds to the predicted phase shifts for the \(i^{\text{th}}\) symbol. The  reflection matrix of $m^{th}$ RIS is then given by
\begin{equation}
\mathbf{\hat{\Phi}}_m^i = \mathrm{diag}(\exp(j\hat{\theta}_{m1}^i), \exp(j\hat{\theta}_{m2}^i), \ldots, \exp(j\hat{\theta}_{mN_m}^i)).
\end{equation}
\subsubsection{Decoder-based CNN Architecture}

At the decoder, the received signal is calculated by exploiting the predicted phase-shift vector current channels as follows:
\begin{equation}
\begin{aligned}
    \mathbf{z}_i &= \left(\sum\nolimits_{m=1}^N\mathbf{H}_m^i\mathbf{\Phi}_m^i\mathbf{D}_m^i\right)\mathbf{t}_i
                 = \mathbf{O}^i\mathbf{t}_i, \quad i = 1,2,\ldots, L_B.
\end{aligned}
\end{equation}
After the decoder received enough $L_B$ symbols, the set of cascaded channels is denoted by $\mathbb{O} = \{\mathbf{O}^1, \mathbf{O}^2, \ldots, \mathbf{O}^{L_B}\}$ and and the received signal matrix is given by $\mathbf{Z} = [\mathbf{z}_1, \mathbf{z}_2, \ldots, \mathbf{z}_{L_B}] \in \mathbb{C}^{K_d \times L_B}$.  These components form the input data of the decoder, which is a tensor with the shape $K_dL_B + K_dK_eL_B$. The real and imaginary parts are used to construct a new tensor with the shape of $2K_dL_B + 2K_dK_eL_B$. This tensor is then processed through multiple 1D-CNN layers, followed by 1D-BN and ReLU activation layers, similar to the encoder structure.  A softmax activation layer is applied at the output to produce a probability vector over all possible transmitted messages for each symbol, denoted as $\mathbf{\hat{P}} = [\mathbf{\hat{p}}_1, \mathbf{\hat{p}}_2, \ldots, \mathbf{\hat{p}}_{L_B}]$. Finally, the decoded message $\mathbf{B'} = [\mathbf{\hat{b}}_1, \mathbf{\hat{b}}_2, \ldots, \mathbf{\hat{b}}_{L_B}]$ is calculated based on the $\mathbf{\hat{p}}_i$ values with the highest probability.
 \subsubsection{Computational complexity of 1D-CNN}
 According to \cite{kiranyaz20211d}, the computational complexity of a 1D-CNN model can be expressed as 
\(\mathcal{O} \left( \sum_{a=1}^{A} L_B k_a^2 F_{a-1} F_a \right)\),  
where $A$ is the total number of layers, $L_B$ represents the block length, $k_a$ denotes the kernel size of the $a^{th}$ layer, and $F_a$ corresponds to the number of filters in the $a^{th}$ layer. In general, the complexity of a 1D-CNN increases with both the block length and kernel size. To characterize the complexity in terms of MIMO system parameters, we assume that the kernel size, number of layers, and number of filters remain constant. Under these assumptions, the computational complexities for the encoder, decoder, and RIS models are given by  
$\mathcal{O}(ML_BK_e$),  $\mathcal{O}(ML_B(K_e + K_d))$,  and $\mathcal{O}(L_B\sum_{m=1}^NN_m^2)$, respectively. Consequently, the overall computational complexity of the end-to-end system can be evaluated as  \(\mathcal{O}(L_BK_eK_d+L_B\sum_{m=1}^NN_m^2)\). Having established the system model, including the double-scattering channel and the proposed autoencoder-based architecture, we now turn our attention to the vulnerability of such systems under adversarial conditions. In the following section, we introduce universal adversarial attack strategies and corresponding defense mechanisms tailored for RIS-assisted MIMO communication systems.

\section{Universal adversarial attack and defense}

\label{Universal adversarial attack and defense}
In this section, we present Algorithm~\ref{alg:2}. Notably, our proposed algorithm is more comprehensive, as it \textcolor{black}{can be} generalized to support distributed RISs.  which enables an adversary to perform a white-box attack on the proposed distributed RIS-assisted MIMO system by utilizing full knowledge of the model and channel state information. To enhance system reliability, we also propose Algorithm~\ref{alg:training} that incorporates universal perturbations during training to improve the robustness of the communication model against adversarial attacks.
\subsection{Adversarial Attack}
The perturbed received signal at the decoder is computed using the predicted phase-shift vectors corresponding to the current channel realization. Specifically, for each symbol $i$, the received signal under adversarial perturbation is given by
\begin{equation}
\begin{aligned}
    \mathbf{z'}_i &=  \left(\sum\nolimits_{m=1}^N\mathbf{H}_m^i\mathbf{\Phi}_m^i\mathbf{D}_m^i\right)\mathbf{t}_i+\left(\sum\nolimits_{m=1}^N\mathbf{H'}_m^i\mathbf{\Phi}_m^i\mathbf{D'}_m^i\right)\mathbf{u}_{\rm e}
                 \\ &= \mathbf{z}_i + \mathbf{C}^i\mathbf{u}_{\rm e}, \quad i = 1,2,\ldots, L_B.
\end{aligned}
\end{equation}
The term $\sum_{m=1}^N\mathbf{H'}_m^i\mathbf{\Phi}_m^i\mathbf{D'}_m^i$, is denoted as $\mathbf{C}^i$, represents the aggregated channel between the attacker and the decoder across RISs. In this expression, \textcolor{black}{$\mathbf{D'}_m^i \in \mathbb{C}^{N_m \times K_e}$ and $\mathbf{H'}_m^i\in \mathbb{C}^{K_d \times N_m}$} denote the channels between the adversary and the RIS $m$, RIS $m$ and the decoder, which are modeled as the double-scattering fading channel. Similar to the decoder, after collecting $L_B$ symbols of perturbation signals, denoted as $\mathbf{Z'} = [\mathbf{z'}_1, \mathbf{z'}_2, \ldots, \mathbf{z'}_{L_B}] \in \mathbb{C}^{K_d \times L_B}$. These are then combined with the cascaded channel $\mathbb{O}$ to form the input data with a size of $2K_e\times L_B + 2K_e \times K_d \times L_B$, which are fed to 1D-CNN layers, each followed by 1D-BN and ReLU activation function. Last but not least, the output  $\mathbf{\hat{P'}} = [\mathbf{\hat{p'}}_1, \mathbf{\hat{p'}}_2, \ldots, \mathbf{\hat{p'}}_{L_B}]$ is achieved by applying the softmax layer, thus, the decoded message $\mathbf{\hat{B'}} = [\mathbf{\hat{b'}}_1, \mathbf{\hat{b'}}_2, \ldots, \mathbf{\hat{b'}}_{L_B}]$, where $\mathbf{\hat{b'}}_i$ is determined based on the highest probability of $\mathbf{\hat{p'}}_i$. It is essential to note that an effective adversarial attack damages the legitimate system, increasing the SER.
\begin{algorithm}[t]
\caption{\textcolor{black}{Multiple Distributed RIS-Aided MIMO Adversarial Example-Based FGM (MRMAEF)}}
\label{alg:2}
\begin{algorithmic}[1]
\color{black}
  \State \textbf{Inputs:} $\mathcal{F}_{\theta}(.)$, ${p}_{\mathrm{PSR}}$, $\sigma^2$, $\{\mathbf{C}^i\}$.
  \For{$i = 1,\dots,n_{i}$}
   \State $\tilde{\mathbf{u}}_{\mathrm{e}} = \mathbf{C}^i \mathbf{u}_{\mathrm{e}}$;
     $\hat{\mathbf{B}}_r = \mathcal{D}\!\left( \sum_{n=1}^N \mathcal{R}_n(\mathcal{E}(\mathbf{B}_d)) + \mathbf{n} + \tilde{\mathbf{u}}_{\mathrm{e}} \right)$;
      \If{$\hat{\mathbf{B}}_r = \mathbf{B}_d$}
          \State $\tilde{\mathbf{u}}_{\mathrm{a}} \leftarrow$ \textbf{FGM\_Update}$(\mathbf{s}_\mathrm{n})$;
          \State    $\mathbf{u}_{\mathrm{a}}
   = \big((\mathbf{C}^i)^H \mathbf{C}^i\big)^{-1} (\mathbf{C}^i)^H \tilde{\mathbf{u}}_{\mathrm{a}}$;
          \If{$\|\mathbf{u}_{\mathrm{e}} + \mathbf{u}_{\mathrm{a}}\|_2^2 \le p_{\mathrm{PSR}}$}
              \State $\mathbf{u}_{\mathrm{e}} \leftarrow \mathbf{u}_{\mathrm{e}} + \mathbf{u}_{\mathrm{a}}$;
          \Else
              \State $\mathbf{u}_{\mathrm{e}} \leftarrow \sqrt{p_{\mathrm{PSR}}}\,\dfrac{\mathbf{u}_{\mathrm{e}} + \mathbf{u}_{\mathrm{a}}}{\|\mathbf{u}_{\mathrm{e}} + \mathbf{u}_{\mathrm{a}}\|_2}$;
          \EndIf
      \EndIf
  \EndFor
  \State \textbf{Output:} $\mathbf{u}_{\mathrm{e}}$.
\hrule
\State \textbf{Function:} \textsc{FGM\_Update}() 
\State \textbf{Input:} $\mathbf{s}_\mathrm{n} = \sum_{n=1}^N \mathcal{R}_n(\mathcal{E}(\mathbf{B}_d)) + \mathbf{n}$, $ \epsilon_{\mathrm{acc}}$.
\State $\boldsymbol{\epsilon} \leftarrow \mathbf{0}^{M \times 1}$
\For{$i = 1$ to $M$}
    \State $\epsilon_{\min} \leftarrow 0$; $\epsilon_{\max} \leftarrow p_{\max}$; $   \mathbf{s}_{\mathrm{norm}} 
   \leftarrow \frac{\nabla_{\mathbf{s}_{\mathrm{n}}}\mathcal{L}(\mathbf{s}_{\mathrm{n}}, e_i)}
          {\big\|\nabla_{\mathbf{s}_{\mathrm{n}}}\mathcal{L}(\mathbf{s}_{\mathrm{n}}, e_i)\big\|_2};$
    \While{$\epsilon_{\max} - \epsilon_{\min} > \epsilon_{\mathrm{acc}}$}
        \State $\epsilon_{\mathrm{avg}} \leftarrow (\epsilon_{\min} + \epsilon_{\max}) / 2$; $\mathbf{s}_{\mathrm{e}} \leftarrow \mathbf{s}_\mathrm{n} - \epsilon_{\mathrm{avg}}\mathbf{s}_{\mathrm{norm}}$;
        \If{$\mathcal{D}(\mathbf{s}_{\mathrm{e}}) = y_{\mathrm{label}}$}
            \State $\epsilon_{\max} \leftarrow \epsilon_{\mathrm{avg}};$
        \Else
            \State $\epsilon_{\min} \leftarrow \epsilon_{\mathrm{avg}};$
        \EndIf
    \EndWhile
    \State $\boldsymbol{\epsilon}[i] \leftarrow \epsilon_{\max};$
\EndFor
\State $e_{\mathrm{target}} \leftarrow \arg \min \boldsymbol{\epsilon}$; $\epsilon^* \leftarrow \underset{i}{\mathrm{min}} \boldsymbol{\epsilon}$; $\tilde{\mathbf{u}}_{\mathrm{a}} \leftarrow~\epsilon^* \nabla_{\mathbf{s}_\mathrm{n}} \mathcal{L}(\mathbf{s}_\mathrm{n}, e_{\text{target}})/{\|\nabla_{\mathbf{s}_\mathrm{n}} \mathcal{L}(\mathbf{s}_\mathrm{n}, e_{\text{target}})\|_2};$
\State \textbf{Output:} $\tilde{\mathbf{u}}_{\mathrm{a}}$.
\end{algorithmic}
\end{algorithm}

Based on the network architecture illustrated in Fig.~\ref{map}, the adversary introduces carefully crafted perturbations to mislead the decoder. To formally characterize this attack strategy, we mathematically model the legitimate and perturbed signals observed at the decoder as follows:
\begin{align}
&\mathbf{s}_{\rm n} =  \sum\nolimits_{i=1}^N\mathcal{R}_i(\mathcal{E}(\mathbf{B}_d)) + \mathbf{n}, \label{eq:w1}\\
&\mathbf{s}_{\rm e} = \sum\nolimits_{i=1}^N\mathcal{R}_i(\mathcal{E}(\mathbf{B}_d)) + \mathbf{n} +  {\tilde{\mathbf{u}}_{\rm{e}}}. \label{eq:w2}
\end{align}
In \eqref{eq:w1} and \eqref{eq:w2}, the functions  $\mathcal{E}(\cdot)$ and $\mathcal{R}_i(\cdot)$ denote the encoder and the 
$i^{th}$ RIS reflection operations within the autoencoder, respectively. These functions are conditioned on the corresponding channel realizations. The adversarial perturbation ${\tilde{\mathbf{u}}_{\rm{e}}}$ in \eqref{eq:w2} is determined by solving the following optimization problem:
\begin{equation}
\label{constrain}
    \begin{aligned}
         & \underset{\mathbf{u}_{\rm e}}{\mathrm{minimize}} \quad \| \mathbf{u}_{\rm e} \|_2 \\ 
         & \text{subject to} \quad \mathcal{D}(\mathbf{s}) \neq \mathcal{D}({\mathbf{s}_{\rm e})},
    \end{aligned}
\end{equation}
where $\mathcal{D}(\cdot)$ is the decoder; ${\tilde{\mathbf{u}}_{\rm{e}}= \mathbf{C}^{i} \mathbf{u}_{\rm e}}$.
We emphasize that the adversary cannot directly use the solution to problem \eqref{constrain} as an optimal adversarial perturbation to fool the decoder, since it depends on a specific input instance. Although the white-box setting is considered in this work, the attacker lacks precise knowledge of the transmitted  \textcolor{black}{random data} and its timing. Inspired by \cite{8651357} and \cite{10742079}, we chose the FGM to construct the primary attack algorithm to generate universal adversarial examples (UAPs)\footnote{It is important to note that more sophisticated attacks such as the Carlini \& Wagner (C\&W) method \cite{carlini2017towards} are not suitable in this context, as they typically require access to the model’s logits (i.e., pre-softmax outputs). Moreover, C\&W focuses on crafting input-specific perturbations, whereas our interest lies in input-agnostic UAPs.} to examine the vulnerabilities of the proposed system. The main reason is that FGM has lower computational complexity than PGD \cite{10742079} while performing well enough to degrade the autoencoder to find the lower bound of the trustworthiness of the legitimate system's performance. Hence, based on Algorithm 1 in \cite{8651357}, we propose the multiple distributed RIS-aided MIMO adversarial example-based FGM (MRMAEF) algorithm to generate the universal adversarial perturbation vector $(\mathbf{u}_{\rm e})$ to maximize the SER of the proposed system. First, Algorithm~\ref{alg:2}\footnote{This algorithm extends Algorithm 1 in \cite{8651357} and Algorithm 1 in \cite{10742079}.} inputs include the model architecture $\mathcal{F}_{\theta}(.)$, the perturbation power constraint $p_{\rm PSR}$, the variance of Gaussian noise $\sigma^2$, and the aggregated channel between the attacker and the decoder across RISs $\mathbf{C}^i$. We denoted $n_i$ as the number of iterations and set $\mathbf{u}_{\rm e} = \mathbf{0}$. Each iteration, the adversary randomly selects a data sample $\mathbf{B}_d $ from the dataset $\mathcal{B} \triangleq \{ \mathbf{B} _d\}$, where $\mathbf{B}_d = [\mathbf{B}_1, \mathbf{B}_2, \ldots, \mathbf{B}_{T}]\in \mathbb{C}^{M\times L_B \times T}$ include $T$ input matrices of the encoder. Hence, the output of the autoencoder is mathematically formulated as 
\begin{equation}
    \hat{\mathbf{B}}_r = \mathcal{D}\left(\sum\nolimits_{n=1}^N\mathcal{R}_n(\mathcal{E}(\mathbf{B}_d))) + \mathbf{n} + {\tilde{\mathbf{u}}_{\rm{e}}} \right) = \mathcal{D}(\mathbf{s}_{\rm e}).
\end{equation} 

\begin{algorithm}[t]
\caption{\textcolor{black}{Adversarial Training for Multiple RIS-Assisted MIMO Systems (ATMRM)}}
\label{alg:training}
\textcolor{black}{
\begin{algorithmic}[1]
\State \textbf{Input:} $\mathcal{F}_{\theta}(\cdot)$, $\sigma^2$, $\mathbf{O}^i$, dataset, $\mathrm{SNR}_{\text{train}}$, PSR, $\mathbf{u}_e$.
\For{$epoch = 1$ to $n_{\text{e}}$}
    \For{$i = 1$ to $L_B$}
        \State $\mathbf{z}_i \leftarrow \mathbf{z}_i + \mathbf{O}^i \mathbf{u}_{\mathrm{e}}$;
        \State $\theta \leftarrow \theta - \alpha \nabla_{\theta} \mathcal{L}(\mathcal{F}_{\theta}(\mathbf{z}_i), y_i)$;
    \EndFor
\EndFor
\State \textbf{Output:} Optimized model $\mathcal{F}'_{\theta}(\cdot)$.
\end{algorithmic}
}
\end{algorithm}
If the decoded message $\hat{\mathbf{B}}_r$ is identical to $\mathbf{B}_d$, i.e., the perturbation is ineffective, an additional adversarial perturbation $\mathbf{u}_{\mathrm{a}}$ is computed using \textcolor{black}{\textbf{FGM\_Update}. Specifically, a null vector $\boldsymbol{\epsilon}=\mathbf{0}^{M\times 1}$ is first initialized. For each symbol, the normalized gradient direction is calculated as
\begin{equation}
   \mathbf{s}_{\mathrm{norm}} 
   = \frac{\nabla_{\mathbf{s}_{\mathrm{n}}}\mathcal{L}(\mathbf{s}_{\mathrm{n}}, e_i)}
          {\big\|\nabla_{\mathbf{s}_{\mathrm{n}}}\mathcal{L}(\mathbf{s}_{\mathrm{n}}, e_i)\big\|_2}.
   \label{eq:snorm}
\end{equation}
Then, given desired perturbation accuracy $\epsilon_{acc}$, a bisection search is performed over $\epsilon\in[0,p_{\max}]$ to determine the minimum $\epsilon$ such that
$\mathcal{D}(\mathbf{s}_{\mathrm{e}})\!\neq y_{\mathrm{label}}$, where $\mathbf{s_{\rm e}} = \mathbf{s_{\rm e}} - \epsilon \mathbf{s}_{\mathrm{norm}}$ and $p_{\max}$ is maximum allowed perturbation norm. After iterating over all $M$ candidate symbols, $\boldsymbol{\epsilon}$ is collected, hence, the most vulnerable target and its minimal perturbation accuracy are determined as $e_{\mathrm{target}}=\arg\min\boldsymbol{\epsilon}$ and $ \epsilon^\star=\underset{i}{\mathrm{min}} \boldsymbol{\epsilon}.$
In the next step, the raw additional adversarial perturbation is computed as follows:
\begin{equation}
   \tilde{\mathbf{u}}_{\mathrm{a}}
   = \epsilon^\star  
   \frac{\nabla_{\mathbf{s}_{\mathrm{n}}}\mathcal{L}(\mathbf{s}_{\mathrm{n}}, e_{\mathrm{target}})}
        {\big\|\nabla_{\mathbf{s}_{\mathrm{n}}}\mathcal{L}(\mathbf{s}_{\mathrm{n}}, e_{\mathrm{target}})\big\|_2}.
   \label{eq:urecv}
\end{equation}
Subsequently, additional adversarial perturbation vector $\mathbf{u}_{\mathrm{a}}$ is obtained as
\begin{equation}
   \mathbf{u}_{\mathrm{a}}
   = \big((\mathbf{C}^i)^H \mathbf{C}^i\big)^{-1} (\mathbf{C}^i)^H \tilde{\mathbf{u}}_{\mathrm{a}}.
   \label{eq:uemitter}
\end{equation}
If the power of the updated perturbation satisfies the perturbation-to-signal ratio (PSR) constraint, i.e.,
$~\|\mathbf{u}_{\mathrm{e}} +~\mathbf{u}_{\mathrm{a}}\|_2^2 \le~p_{\mathrm{PSR}}$, $\mathbf{u}_{\mathrm{e}}$ is updated as follows:
\begin{equation}
\mathbf{u}_{\mathrm{e}} =\mathbf{u}_{\mathrm{e}} + \mathbf{u}_{\mathrm{a}}.
\end{equation}
Otherwise, it is normalized to satisfy the constraint:
\begin{equation}
\mathbf{u}_{\mathrm{e}} =\sqrt{p_{\mathrm{PSR}}}\,
\frac{\mathbf{u}_{\mathrm{e}} + \mathbf{u}_{\mathrm{a}}}
     {\|\mathbf{u}_{\mathrm{e}} + \mathbf{u}_{\mathrm{a}}\|_2}.
\end{equation}}

In summary, to adapt the perturbation to the input structure, a null vector $\pmb{\epsilon} \in \mathbb{C}^{M\times 1}$ is initialized. For each iteration, the algorithm assigns a random label from the dataset and performs a bisection search to find the minimum $\epsilon$ that successfully misleads the decoder using FGM.
After $\epsilon^*$ is determined, $\mathbf{u}_{\rm a}$ is evaluated by computing the gradient of the binary cross-entropy loss (BCE) function $\mathcal{L}(\cdot, \cdot)$ at the target point.
By considering $\mathbf{u}_{\rm a}$ as the additive noise vector and checking the $\ell_2$-norm of total noise vector, $\mathbf{u}_{\rm e}$ is updated as  $\mathbf{u}_{\rm e} \leftarrow \mathbf{u}_{\rm e} + \mathbf{u}_{\rm a}$. The procedure is repeated until all $n_i$ iterations are completed, after which the final universal adversarial perturbation vector $\mathbf{u}_{\rm{e}}$ is obtained. The resulting vector can \textcolor{black}{be generalized} across various inputs, making it a potent input-agnostic attack. The overall computational complexity of the MRMAEF algorithm is $\mathcal{O}(N_iM^2L_B^2)$.  The theoretical foundation of the proposed algorithm is presented in Lemma~\ref{theorem1} and Corollaries 1 and 2.

\begin{lemma}
\label{theorem1}
\textbf{Gradient-based Adversarial Vulnerability of RIS-Aided MIMO Autoencoders:} Let $\mathcal{D}(\cdot)$ denote the trained decoder of a distributed RIS-aided MIMO autoencoder. Let $\mathbf{s}_\mathrm{n} \in \mathbb{C}^{K_d \times L_B}$ be a clean received signal and $y$ the corresponding true label. For any target label $y_{\text{target}} \neq y$, there exists a perturbation vector $\tilde{\mathbf{u}}_{\mathrm{e}}$ with the energy bounded by $\epsilon$, such that
           $ \mathcal{D}(\mathbf{s}_\mathrm{n} + \tilde{\mathbf{u}}_{\rm e}) = y_{\mathrm{target}},$
        provided that $\epsilon$ satisfies
        \begin{equation}
            \epsilon \geq \frac{\delta}{\left\| \nabla_{\mathbf{s}_\mathrm{n}} \mathcal{L}(\mathcal{D}(\mathbf{s}_\mathrm{n}), y_{\text{target}}) \right\|_2},
        \end{equation}
        for some $\delta > 0$, where $\mathcal{L}(\cdot, \cdot)$ is BCE loss function. The optimal perturbation that maximizes the misclassification probability is given by the gradient-based method \cite{sutton2024adversarial}:
        \begin{equation}
\tilde{\mathbf{u}}_{\mathrm{e}} = \epsilon  \frac{\nabla_{\mathbf{s}_\mathrm{n}} \mathcal{L}(\mathcal{D}(\mathbf{s}_\mathrm{n}), y_{\text{target}})}{\left\| \nabla_{\mathbf{s}_\mathrm{n}} \mathcal{L}(\mathcal{D}(\mathbf{s}_\mathrm{n}), y_{\text{target}}) \right\|_2}.
            \label{34}
        \end{equation}
\end{lemma}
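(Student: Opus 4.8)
The plan is to regard the composite scalar map $g(\mathbf{s}) \triangleq \mathcal{L}(\mathcal{D}(\mathbf{s}), y_{\mathrm{target}})$ as a (Wirtinger-)differentiable function of the received signal and to quantify how a norm-bounded perturbation steers the decoder's $\arg\max$ decision toward $y_{\mathrm{target}}$. First I would expand $g$ to first order about the clean signal $\mathbf{s}_{\mathrm{n}}$,
\begin{equation}
g(\mathbf{s}_{\mathrm{n}} + \tilde{\mathbf{u}}_{\mathrm{e}}) = g(\mathbf{s}_{\mathrm{n}}) + \mathrm{Re}\!\left\langle \nabla_{\mathbf{s}_{\mathrm{n}}} \mathcal{L}(\mathcal{D}(\mathbf{s}_{\mathrm{n}}), y_{\mathrm{target}}),\, \tilde{\mathbf{u}}_{\mathrm{e}} \right\rangle + o\!\left(\|\tilde{\mathbf{u}}_{\mathrm{e}}\|_2\right),
\end{equation}
and introduce the decision margin $\delta > 0$ as the smallest change in $g$ that flips the softmax $\arg\max$ to $y_{\mathrm{target}}$. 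This reduces the existence claim to whether a perturbation with $\|\tilde{\mathbf{u}}_{\mathrm{e}}\|_2 \le \epsilon$ can realize a first-order change of magnitude at least $\delta$.

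Next I would resolve the direction-selection problem by Cauchy--Schwarz: over the energy ball $\|\tilde{\mathbf{u}}_{\mathrm{e}}\|_2 \le \epsilon$, the linear term $\mathrm{Re}\langle \nabla_{\mathbf{s}_{\mathrm{n}}} \mathcal{L}, \tilde{\mathbf{u}}_{\mathrm{e}}\rangle$ is extremized precisely when $\tilde{\mathbf{u}}_{\mathrm{e}}$ is collinear with the gradient, with the sign fixed by the targeted objective of pushing $g$ across its boundary. Normalizing the collinear vector to the budget $\epsilon$ then reproduces the closed form \eqref{34}, and the attained first-order magnitude equals $\epsilon\,\|\nabla_{\mathbf{s}_{\mathrm{n}}} \mathcal{L}(\mathcal{D}(\mathbf{s}_{\mathrm{n}}), y_{\mathrm{target}})\|_2$.

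Finally I would impose the success condition. Requiring this maximal magnitude to meet the margin, $\epsilon\,\|\nabla_{\mathbf{s}_{\mathrm{n}}} \mathcal{L}\|_2 \ge \delta$, and rearranging gives the stated threshold $\epsilon \ge \delta / \|\nabla_{\mathbf{s}_{\mathrm{n}}} \mathcal{L}(\mathcal{D}(\mathbf{s}_{\mathrm{n}}), y_{\mathrm{target}})\|_2$; under this bound the perturbation \eqref{34} moves $g$ past the boundary, so $\mathcal{D}(\mathbf{s}_{\mathrm{n}} + \tilde{\mathbf{u}}_{\mathrm{e}}) = y_{\mathrm{target}}$, which is the assertion.

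The hard part will be certifying that the linear surrogate faithfully predicts the true decision flip, since the expansion is only locally valid. I expect the main obstacle to be controlling the higher-order remainder over the $\epsilon$-ball --- equivalently, assuming local linearity or a Lipschitz gradient (bounded Hessian) of $g$ --- so that the margin $\delta$ read off from the linearized loss genuinely corresponds to crossing the decoder's $\arg\max$ boundary rather than merely reducing confidence. A secondary technicality is the non-smoothness of the ReLU-based decoder, which I would address by working with real and imaginary parts and invoking differentiability almost everywhere, as is standard for gradient-based attacks.
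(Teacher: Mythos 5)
Your proposal is correct and follows essentially the same route as the paper's proof: a first-order Taylor expansion of the loss about $\mathbf{s}_{\mathrm{n}}$, alignment of $\tilde{\mathbf{u}}_{\mathrm{e}}$ with the gradient direction to maximize the linear term (which you justify explicitly via Cauchy--Schwarz, where the paper simply asserts it), and the resulting flip condition $\epsilon\,\|\nabla_{\mathbf{s}_{\mathrm{n}}}\mathcal{L}\|_2 \geq \delta$. Your added caveats --- the Wirtinger/real-inner-product treatment of complex signals, the higher-order remainder, and ReLU non-smoothness --- are refinements the paper's proof glosses over, not a different argument.
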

\begin{proof}
    Let the decoder output under perturbation be
    \begin{equation}
        \mathcal{D}(\mathbf{s}_\mathrm{n} + \tilde{\mathbf{u}}_{\mathrm{e}}) \approx \mathcal{D}(\mathbf{s}_\mathrm{n}) + J_{\mathcal{D}} \tilde{\mathbf{u}}_{\mathrm{e}},
    \end{equation} 
    where $J_{\mathcal{D}}$ is the Jacobian of the decoder. We perform a first-order Taylor expansion of the loss function around $\mathbf{s}_\mathrm{n}$:
    \begin{equation}
        \mathcal{L}(\mathcal{D}(\mathbf{s}_\mathrm{n} + \tilde{\mathbf{u}}_{\mathrm{e}}), y_{\text{target}}) \approx \mathcal{L}(\mathcal{D}(\mathbf{s}_\mathrm{n}), y_{\text{target}}) + \nabla_{\mathbf{s}_\mathrm{n}} \mathcal{L}^{\top} \tilde{\mathbf{u}}_{\mathrm{e}}.
    \end{equation}
    To maximize the increase in loss, we align $\tilde{\mathbf{u}}_{\mathrm{e}}$ with the gradient direction as illustrated in~(\ref{34}).
    Then,
    \begin{equation}
        \mathcal{L}(\mathcal{D}(\mathbf{s}_\mathrm{n} + \tilde{\mathbf{u}}_{\mathrm{e}}), y_{\text{target}}) \geq \mathcal{L}(\mathcal{D}(\mathbf{s}_\mathrm{n}), y_{\text{target}}) + \epsilon \|\nabla_{\mathbf{s}_\mathrm{n}} \mathcal{L}\|_2.
    \end{equation}
    Thus, if $\epsilon \|\nabla_{\mathbf{s}_\mathrm{n}} \mathcal{L}\|_2 \geq \delta$ for some threshold $\delta$, the decoder prediction can flip, completing the proof.
\end{proof}
\textbf{Corollary 1} (Linear Approximation of the Decoder): \textit{Let the end-to-end system be represented \textcolor{black}{by} a composite mapping: 
\begin{equation}
    \mathcal{F}: \mathbf{s}_\mathrm{n} \mapsto \hat{\mathbf{B}}_r = \mathcal{D} \left( \sum\nolimits_{n=1}^{N} \mathcal{R}_n(\mathcal{E}(\mathbf{B}_d)) + \mathbf{n} \right),
\end{equation} where $\mathbf{s}_\mathrm{n}$ denotes the received signal prior to decoding. Under adversarial attack, a perturbation $\tilde{\mathbf{u}}_e$ is injected, resulting in a new received signal
$\mathbf{s}_e = \mathbf{s}_\mathrm{n} + \tilde{\mathbf{u}}_e$.
Assuming $\mathcal{F}$ is differentiable with respect to $\mathbf{s}_\mathrm{n}$, we can apply a first-order Taylor expansion:
\begin{equation}
    \mathcal{F}(\mathbf{s}_e) \approx \mathcal{F}(\mathbf{s}_\mathrm{n}) + \nabla_{\mathbf{s}_\mathrm{n}} \mathcal{F}(\mathbf{s}_\mathrm{n})^\top \tilde{\mathbf{u}}_e.
\end{equation} 
Due to the high dimensionality of $\mathbf{s}_\mathrm{n}$ in wireless systems, even small perturbations $\tilde{\mathbf{u}}_e$ can result in large changes in $\mathcal{F}(\mathbf{s}_\mathrm{e})$, causing incorrect symbol decisions \cite{goodfellow2014explaining}.}

\textbf{Corollary 2} (Bounded PSR Constraint): \textit{If the perturbation energy is constrained by a PSR budget $\|\tilde{\mathbf{u}}_{\mathrm{e}}\|_2^2 \leq p_{\text{PSR}}$, then the maximum increase in loss satisfies:
\begin{equation}
    \mathcal{L}(\mathcal{D}(\mathbf{s}_\mathrm{n} + \tilde{\mathbf{u}}_{\mathrm{e}}), y_{\text{target}}) \leq \mathcal{L}(\mathcal{D}(\mathbf{s}_\mathrm{n}), y_{\text{target}}) + \sqrt{p_{\text{PSR}}}  \left\|\nabla_{\mathbf{s}_\mathrm{n}} \mathcal{L}\right\|_2.
\end{equation}
Therefore, a successful attack requires:
\begin{equation}
    \sqrt{p_{\text{PSR}}}  \left\| \nabla_{\mathbf{s}_\mathrm{n}} \mathcal{L} \right\|_2 \geq \delta.
\end{equation}} 

Lemma~\ref{theorem1} establishes the minimum perturbation magnitude measured by its $\ell_2$-norm, which is required to flip the decoder’s decision in a RIS-assisted MIMO autoencoder, and shows that the most damaging perturbation aligns with the gradient of the decoder’s loss function. Corollary~1 employs a first-order (linear) approximation to explain why even a small, gradient-aligned perturbation can induce a large change in the decoder’s output in high-dimensional signal spaces. Corollary~2 then imposes a practical power constraint (PSR budget) on the perturbation, demonstrating how this bound caps the attacker’s ability to increase the decoding loss. Collectively, these results confirm that gradient-based attacks (e.g., FGM) are near-optimal in high dimensions and accurately predict the sharp SER degradations observed in our numerical evaluations, including the emergence of small yet highly effective “universal” perturbations across multiple input samples. Having characterized and implemented potent gradient-based adversarial attacks, we now explore \textcolor{black}{their} corresponding defense mechanisms. In the next subsection, we introduce our adversarial training-based algorithm, designed to strengthen the autoencoder against the considered attack method.

\begin{table*}[ht]
\caption{Summary of channel parameters for the secured model used in the simulation}
\label{locationRIS}
\centering
\scriptsize
\renewcommand{\arraystretch}{1.3}
\begin{tabular}{|>{\centering\arraybackslash}p{0.8cm}|>{\centering\arraybackslash}p{2.8cm}|>{\centering\arraybackslash}p{3.5cm}|>{\centering\arraybackslash}p{4.5cm}|>{\centering\arraybackslash}p{3.9cm}|}
\hline
\textbf{Link} & \textbf{Distance} & \textbf{Angle of Arrival (AoA)} & \textbf{Angle of Departure (AoD)} & \textbf{LOS Component} \\ \hline

$\mathbf{D}_1$ & $d_2 - d_1$ & $\theta^A_{E1} = \frac{\pi}{4}$, $\phi^A_{E1} = 0$ & $\theta^D_{E1} = \frac{\pi}{2}$ & \multirow[c]{4}{=}{%
$\bar{\mathbf{D}}_m = \mathbf{a}_m(\theta^A_{Em}, \phi^A_{Em}) \mathbf{a}_E(\theta^D_{Em})$} \\ \cline{1-4}
$\mathbf{D}_2$ & $d_1$ & $\theta^A_{E2} = \frac{\pi}{4}$, $\phi^A_{E2} = 0$ & $\theta^D_{E2} = \frac{\pi}{2}$ & \\ \cline{1-4}
$\mathbf{D}_3$ & $\sqrt{2d_1^2}$ & $\theta^A_{E3} = \frac{\pi}{2}$, $\phi^A_{E3} = 0$ & $\theta^D_{E3} = \frac{\pi}{4}$ & \\ \cline{1-4}
$\mathbf{D}_4$ & $\sqrt{(d_2 - d_1)^2 + d_1^2}$ & $\theta^A_{E4} = \frac{\pi}{4} + \arctan\left(\frac{d_2 - d_1}{d_1}\right)$, $\phi^A_{E4} = 0$ & $\theta^D_{E4} = \arctan\left(\frac{d_1}{d_2 - d_1}\right)$ & \\ \hline

$\mathbf{H}_1$ & $\sqrt{(d_2 - d_1)^2 + d_1^2 + d_3^2}$ & $\theta^A_{R1} = \arctan\left(\frac{d_2 - d_1}{d_1}\right)$ & $\theta^D_{R1} = \frac{\pi}{4} + \arctan\left(\frac{d_2 - d_1}{d_1}\right)$, $\phi^A_{R1} = \arctan\left(\frac{d_3}{d_1}\right)$ & \multirow[c]{4}{=}{%
$\bar{\mathbf{H}}_m = \mathbf{a}_R(\theta^A_{Rm}) \mathbf{a}_m(\theta^D_{Rm}, \phi^D_{Rm})$} \\ \cline{1-4}
$\mathbf{H}_2$ & $\sqrt{2d_1^2 + d_3^2}$ & $\theta^A_{R2} = \frac{\pi}{4}$ & $\theta^D_{R2} = \frac{\pi}{2}$, $\phi^A_{R2} = \arctan\left(\frac{d_3}{\sqrt{2}d_1}\right)$ & \\ \cline{1-4}
$\mathbf{H}_3$ & $\sqrt{d_1^2 + d_3^2}$ & $\theta^A_{R3} = \frac{\pi}{2}$ & $\theta^D_{R3} = \frac{\pi}{4}$, $\phi^A_{R3} = \arctan\left(\frac{d_3}{d_1}\right)$ & \\ \cline{1-4}
$\mathbf{H}_4$ & $\sqrt{(d_2 - d_1)^2 + d_3^2}$ & $\theta^A_{R4} = \frac{\pi}{2}$ & $\theta^D_{R4} = \frac{\pi}{4}$, $\phi^A_{R4} = \arctan\left(\frac{d_3}{d_1}\right)$ & \\ \hline

\end{tabular}
\end{table*}

\begin{table*}[ht]
\caption{Summary of adversarial channel parameters used in simulation scenarios}
\label{locationadverasry}
\centering
\scriptsize
\renewcommand{\arraystretch}{1.3}
\begin{tabular}{|>{\centering\arraybackslash}p{0.6cm}|>{\centering\arraybackslash}p{2.8cm}|>{\centering\arraybackslash}p{3.7cm}|>{\centering\arraybackslash}p{4.5cm}|>{\centering\arraybackslash}p{3.9cm}|}
\hline
\textbf{Link} & \textbf{Distance} & \textbf{Angle of Arrival (AoA)} & \textbf{Angle of Departure (AoD)} & \textbf{LOS Component} \\ \hline

$\mathbf{D}_1'$ & $\sqrt{(d_2 - d_1)^2 + d_3^2}$ & $\theta^A_{E1} = \frac{\pi}{4}$, $\phi^A_{E1} = \arctan(\frac{d_3}{d_2-d_1})$ & $\theta^D_{E1} = \frac{\pi}{2}$ & \multirow[c]{4}{=}{%
$\bar{\mathbf{D}}'_m = \mathbf{a}_m(\theta^A_{Em}, \phi^A_{Em})\mathbf{a}_E(\theta^D_{Em})$} \\ \cline{1-4}
$\mathbf{D}_2'$ & $\sqrt{d_1^2 + d_3^2}$ & $\theta^A_{E2} = \frac{\pi}{4}$, $\phi^A_{E2} = \arctan(\frac{d_3}{d_1})$ & $\theta^D_{E2} =\arctan(\frac{d_3}{\sqrt{2}d_1})$ & \\ \cline{1-4}
$\mathbf{D}_3'$ & $\sqrt{2d_1^2 + d_3^2}$ & $\theta^A_{E3} = \frac{\pi}{2}$, $\phi^A_{E3} = \arctan(\frac{d_3}{\sqrt{2}d_1})$ & $\theta^D_{E3} = \frac{\pi}{4}$ & \\ \cline{1-4}
$\mathbf{D}_4'$ & $\sqrt{(d_2-d_1)^2 + d_1^2 + d_3^2}$ & $\theta^A_{E4} = \frac{\pi}{4} + \arctan\left(\frac{d_2-d_1}{d_1}\right)$, $\phi^A_{E4} = \arctan(\frac{d_3}{\sqrt{d_1^2+(d_2-d_1)^2}})$ & $\theta^D_{E4} = \arctan\left(\frac{d_1}{d_2-d_1}\right)$ & \\ \hline

$\mathbf{H}_1'$ & $\sqrt{(d_2 - d_1)^2 + d_1^2 + d_3^2}$ & $\theta^A_{R1} = \arctan\left(\frac{d_2-d_1}{d_1}\right)$ & $\theta^D_{R1} = \frac{\pi}{4} + \arctan\left(\frac{d_2-d_1}{d_1}\right)$, $\phi^A_{R1} = \arctan\left(\frac{d_3}{d_1}\right)$ & \multirow[c]{4}{=}{%
$\bar{\mathbf{H}}'_m = \mathbf{a}_R(\theta^A_{Rm})\mathbf{a}_m(\theta^D_{Rm}, \phi^D_{Rm})$} \\ \cline{1-4}
$\mathbf{H}_2'$ & $\sqrt{2d_1^2 + d_3^2}$ & $\theta^A_{R2} = \frac{\pi}{4}$ & $\theta^D_{R2} = \frac{\pi}{2}$, $\phi^A_{R2} = \arctan\left(\frac{d_3}{\sqrt{2}d_1}\right)$ & \\ \cline{1-4}
$\mathbf{H}_3'$ & $\sqrt{d_1^2 + d_3^2}$ & $\theta^A_{R3} = \frac{\pi}{2}$ & $\theta^D_{R3} = \frac{\pi}{4}$, $\phi^A_{R3} = \arctan\left(\frac{d_3}{d_1}\right)$ & \\ \cline{1-4}
$\mathbf{H}_4'$ & $\sqrt{(d_2 - d_1)^2 + d_3^2}$ & $\theta^A_{R4} = \frac{\pi}{2}$ & $\theta^D_{E4} = \frac{\pi}{4}$, $\phi^A_{R4} = \arctan\left(\frac{d_3}{d_1}\right)$ & \\ \hline

\end{tabular}
\end{table*}

\subsection{Adversarial Defense}

To improve the model’s robustness against adversarial attacks, we propose the adversarial training for multiple RIS-assisted MIMO Systems (ATMRM) algorithm. The goal is to reduce the system’s SER by mitigating the impact of adversarial perturbations. The adversarial training procedure for multiple RIS-assisted MIMO systems is outlined in Algorithm~\ref{alg:training}. The inputs to the algorithm include the model architecture $\mathcal{F}_{\theta}(.)$, noise variance $\sigma^2$, channel matrices $\mathbf{O}^i$, training dataset, training SNR ($\mathrm{SNR}_{\text{train}}$), PSR, and the adversarial perturbation vector $\mathbf{u}_{\mathrm{e}}$\footnote{\textcolor{black}{In practice, $\mathbf{u}_{\mathrm{e}}$ is not directly observed by the decoder but can be 
generated at the system design stage using AI models (e.g., large telecommunication 
models, GANs, diffusion) to simulate adversarial threats.}}.

\textcolor{black}{During each epoch, input batches along with $\mathbf{O}^i$ are loaded. For each batch, a fixed perturbation $\mathbf{u}_{\mathrm{e}}$ is applied at the decoder by modifying the received signal as $\mathbf{z}_i + \mathbf{O}^i \mathbf{u}_{\mathrm{e}}$. Note that $\mathbf{O}^i$ denotes the cascaded effective channel from the encoder through the distributed RISs to the decoder, which may include Doppler-induced variations under mobility. The model processes the perturbed input to produce the output, after which the BCE loss is evaluated. The model parameters are then updated via gradient descent to minimize the loss as follows:
\begin{equation}
\theta \leftarrow \theta - \alpha \nabla_{\theta} \mathcal{L}(\mathcal{F}_{\theta}(\mathbf{z}_i + \mathbf{O}^i \mathbf{u}_{\mathrm{e}}), y_i),
\end{equation}where $y_i$ is the correct label.} From a theoretical perspective, the adversarial training process can be modeled as a stochastic optimization problem:
\begin{equation}
\min_{\theta} \mathbb{E}_{\mathbf{O}^i, \mathbf{u}_{\mathrm{e}}} \left[\mathcal{L}\left(\mathcal{F}_{\theta}(\mathbf{z}_i + \mathbf{O}^i \mathbf{u}_{\mathrm{e}}), y_i\right)\right].
\label{eq:adv_train}
\end{equation}

Unlike conventional adversarial training algorithms, Algorithm 2 not only considers distributed RIS deployments with correlated fading but also can adapt with mobility. Upon convergence, the algorithm yields a model that demonstrates robust performance under both adversarial and dynamic wireless conditions. The practical effectiveness of the proposed defense is validated via numerical simulations across multiple RIS configurations and mobility scenarios, as discussed in the following section.

\label{NUMERICAL RESULTS}
\section{Numerical results}
\label{Numerical results}
\begin{figure*}[t]
	\centering
    \begin{minipage}[t]{0.31\textwidth}
	\includegraphics[width=1\textwidth]{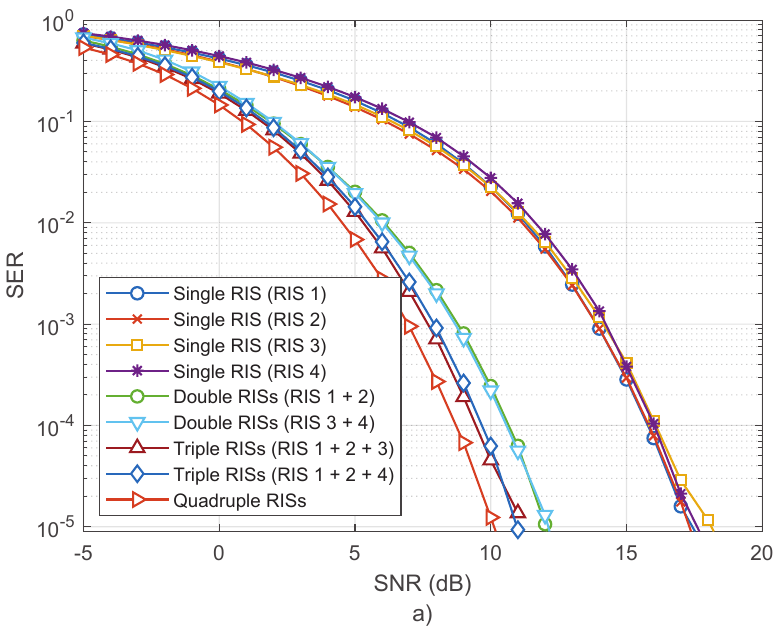}
    \end{minipage}
    \begin{minipage}[t]{0.31\textwidth}
	\includegraphics[width=1\textwidth]{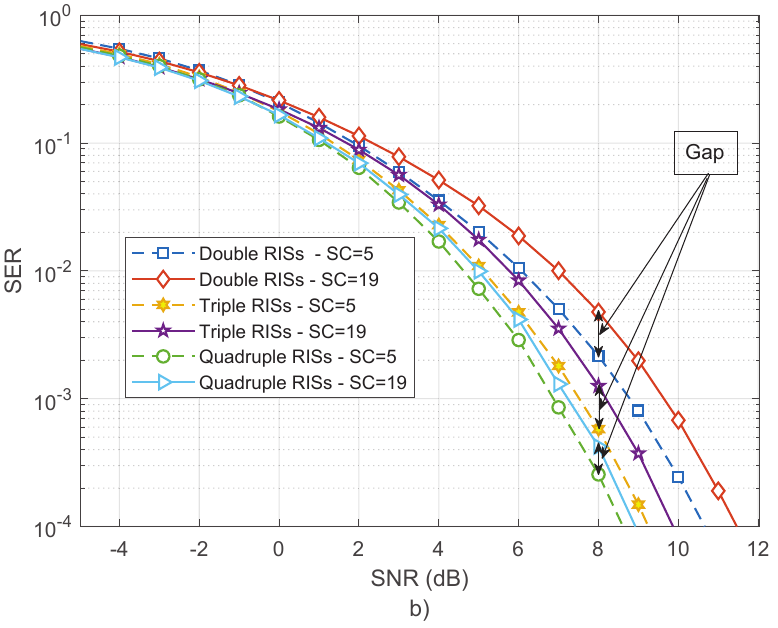}
    \end{minipage}
    	\centering
    \begin{minipage}[t]{0.31\textwidth}
	\includegraphics[width=1\textwidth]{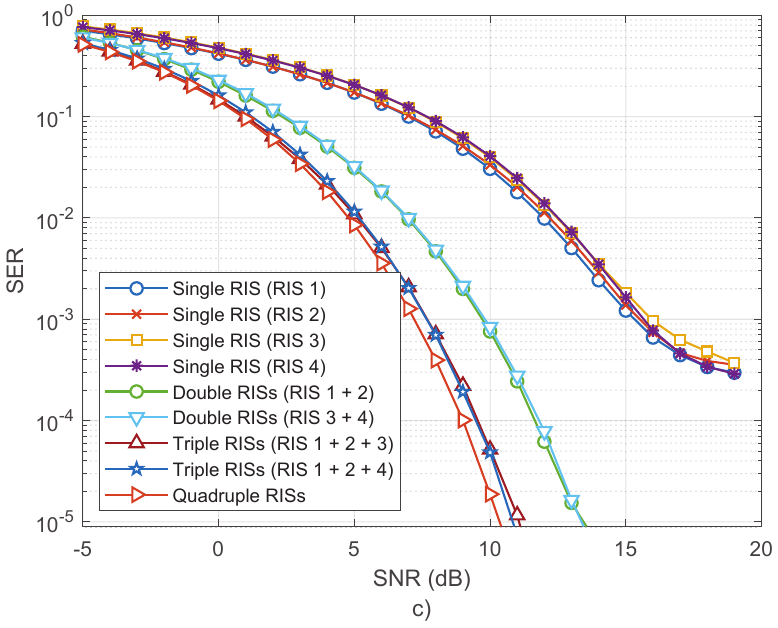}
    \end{minipage}
\caption{SER performance of RIS-assisted MIMO systems under double-scattering channels with varying numbers of scatterers: (a) $\mathbf{SC}_\mathbf{L}=5$, (b) $\mathbf{SC}_\mathbf{L}=5$ and $19$, and (c) $\mathbf{SC}_\mathbf{L}\to\infty$.}

\label{SERall}
\end{figure*}  

This section presents extensive numerical results\footnote{\textcolor{black}{ It is important to note that all of the Monte-Carlo simulations are investigated under the low-SNR regime.}} to evaluate the performance of the proposed adversarial attacks and defense. We also compare the robustness of multiple RIS-aided systems trained with and without adversarial training. 

\subsection{Simulation Configuration}
    
    
    
    
    
The deployment of the distributed RIS-assisted MIMO autoencoder system is structured in a three-dimensional (3D) Cartesian coordinate system to enhance signal propagation, spatial diversity, and practical deployment considerations. The encoder (transmitter) and decoder (receiver) are located at coordinates $(d_1, 0, d_3)$ and $(d_1, d_1, 0)$, respectively. Four distributed RISs are strategically placed as:  RIS~1 at $(d_2, 0, d_3)$ with azimuth angle $\frac{3\pi}{4}$; RIS~2 at $(0, 0, d_3)$ with azimuth angle $\frac{\pi}{4}$; RIS~3 at $(0, d_1, d_3)$ with azimuth angle $-\frac{\pi}{4}$; and RIS~4 at $(d_2, d_1, d_3)$ with azimuth angle $-\frac{3\pi}{4}$.  The path loss for all the communication links is modeled in accordance with the 3GPP Urban Micro (UMi) NLOS scenario~\cite{3GPP_TR36_814}, defined as $ \gamma_{\text{pl}} = 35.6 + 22\log_{10}(l)$, where $l$ is the link distance in meters and $\textcolor{black}{\alpha_m =\gamma_m = \gamma_{\text{pl}}}$.
The simulation parameters are configured as: transmit power: $P = 30$ [dBm]; noise power: $\sigma^2 = -90$ [dBm]; number of antennas at encoder and decoder: $K_e = K_d = 16$; and number of reflecting elements per RIS: $N_m = 32$.
An adversary equipped with $K_a = 16$ antennas is assumed to be located at $(d_1, 0, 0)$. The secure communication channels and the adversarial attack channels are comprehensively detailed in Table~\ref{locationRIS} and Table~\ref{locationadverasry}, \textcolor{black}{respectively}. \textcolor{black}{Additionally, throughout all simulations, $L_B$ is set to $20$.}

During the training phase, a dataset comprising 150,000 samples of data symbols and their corresponding channel realizations is generated. Among these, 135,000 samples are utilized for training, while the remaining 15,000 are reserved for testing and evaluation purposes. All 1D-CNN networks are optimized jointly using the Adam optimizer \cite{kingma2014adam}, with an initial learning rate set at 0.001, which is reduced by a factor of five every five epochs to facilitate stable convergence. Including 1D-BN layers facilitates rapid convergence, allowing us to complete the training in 20 epochs. During training, we vary the noise power to reflect different SNR levels of the transmitted signal, thereby evaluating the system’s robustness across diverse conditions.
All simulations are executed on a workstation equipped with an Intel Core i7-12700K CPU running at 3.6~GHz and an NVIDIA GeForce RTX 3060 GPU with 16~GB of memory. The Monte Carlo simulations are conducted in MATLAB R2024b, while the neural network models are implemented using Python 3.9.3 and the PyTorch deep learning framework.

The encoder takes an input of dimensions $M \times L_B$ and comprises three consecutive convolutional layers with 1D-BN and ReLU activation. The first two convolutional layers employ 256 filters of kernel size 1, while the final layer increases the filters to $2K_e$, producing an output of $2K_e \times L_B$. The decoder processes an input of size $2K_eL_B + 2K_eK_dL_B$ and consists of three convolutional layers. The first two layers utilize 512 filters with BN and ReLU activation, while the final layer applies a softmax activation with $M$ filters, generating an output of $M \times L_B$. The RIS model for each intelligent surface starts with an input of $2N_m \times L_B$ and consists of three convolutional layers. The first two layers have 512 filters with BN and ReLU activation, whereas the final layer uses $N_m$ filters to produce an output of $N_m\times L_B$. These configurations ensure efficient feature extraction and transformation across the encoder, decoder, and RIS models. \textcolor{black}{ The summary of parameter settings is presented in Table~\ref{tab:parra}}.


\subsection{Secured SER Evaluation}
\begin{figure*}[t]
	\centering
    \begin{minipage}[t]{0.31\textwidth}
	\includegraphics[width=1\textwidth]{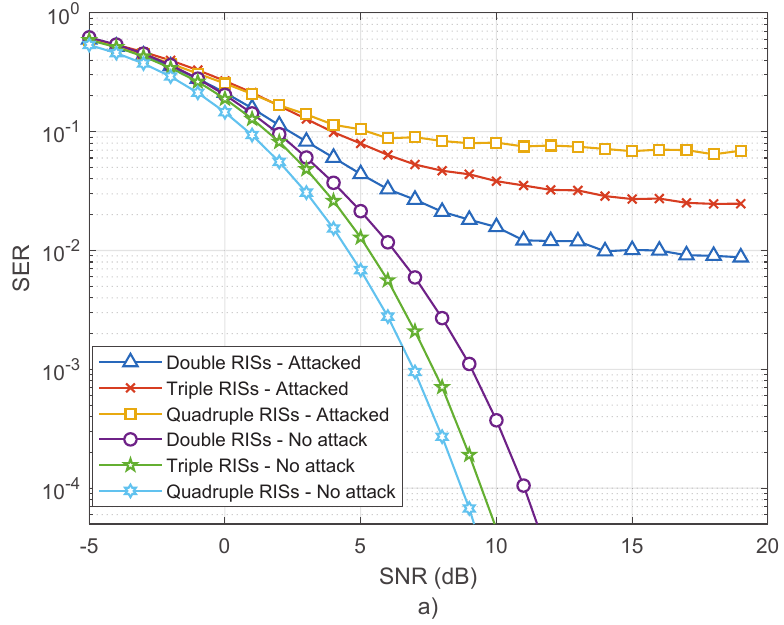}
    \end{minipage}
    \begin{minipage}[t]{0.31\textwidth}
	\includegraphics[width=1\textwidth]{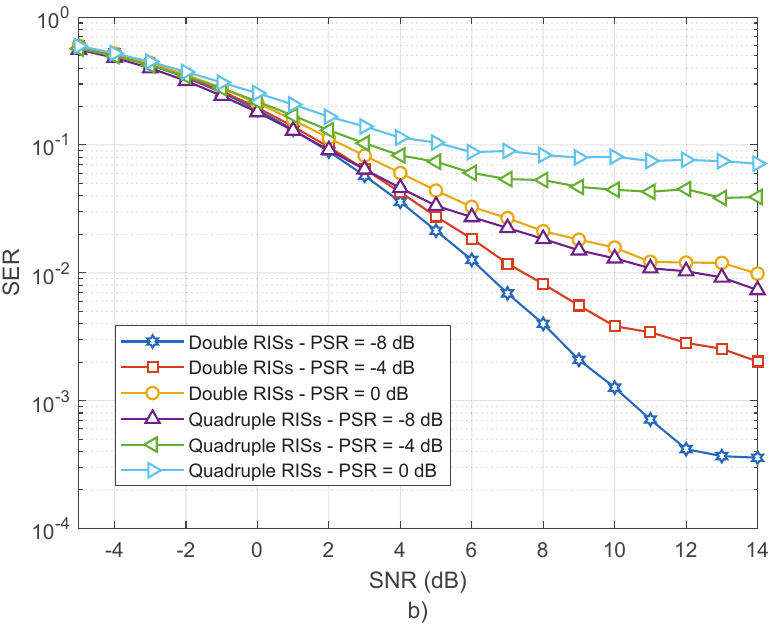}
    \end{minipage}
    \begin{minipage}[t]{0.31\textwidth}
	\includegraphics[width=1\textwidth]{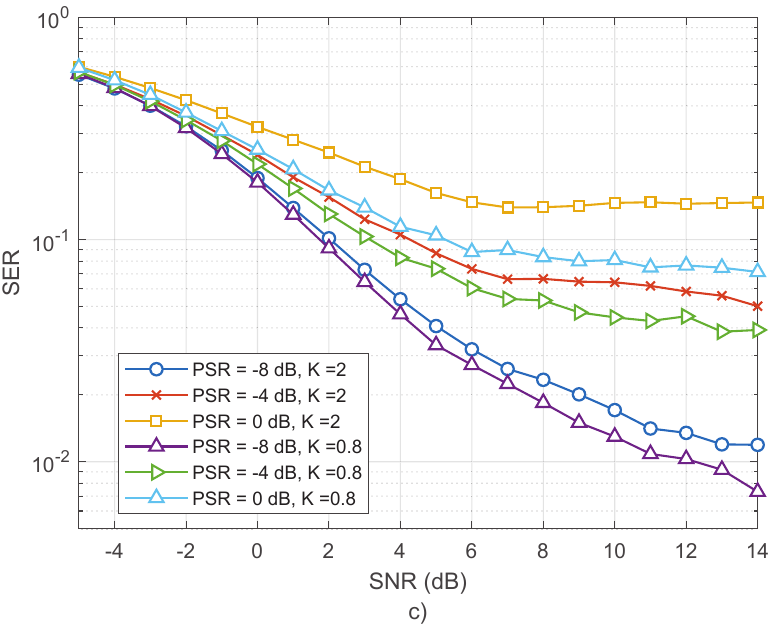}
    \end{minipage}

\caption{The SER performance of multiple RIS-assisted MIMO systems under the MRMAEF adversarial attack: (a) The SER of different RIS configurations at a fixed Rician factor of 0.8 and PSR of 0~dB, (b) The SER comparison of double and quadruple RISs at Rician factor 0.8 with varying PSR levels, and (c) The SER of the quadruple RIS configuration under varying Rician factors and PSR values.}
\label{MRMAEF}
\end{figure*}
\begin{figure*}[t]
	\centering
    \begin{minipage}[t]{0.31\textwidth}
	\includegraphics[width=1\textwidth]{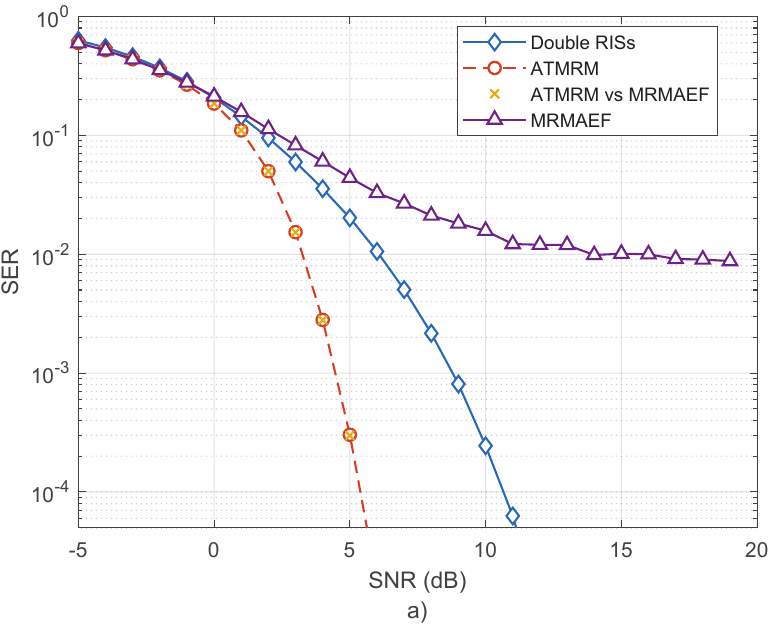}
    \end{minipage}
    \begin{minipage}[t]{0.31\textwidth}
	\includegraphics[width=1\textwidth]{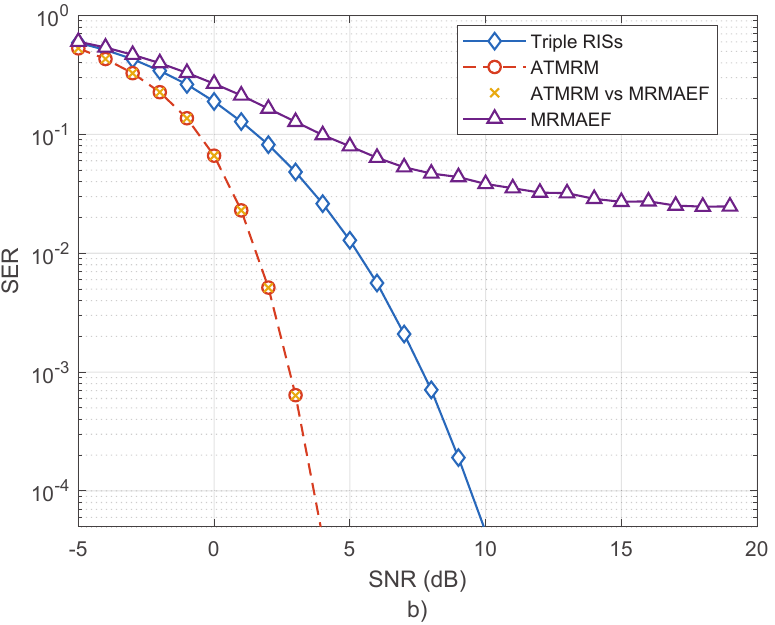}
    \end{minipage}
    \begin{minipage}[t]{0.31\textwidth}
	\includegraphics[width=1\textwidth]{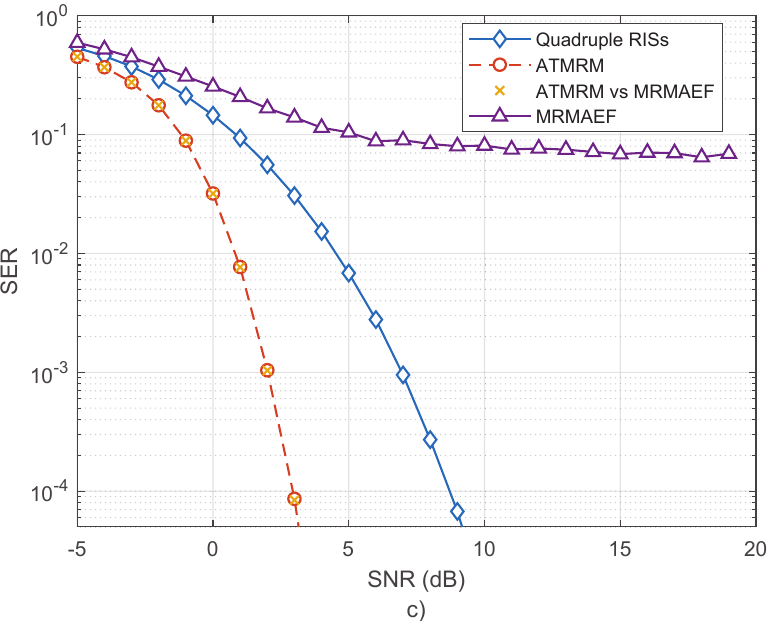}
    \end{minipage}
\caption{The SER performance of RIS-assisted MIMO systems under adversarial attack (MRMAEF) and defense (ATMRM) over double-scattering channels with 5 scatterers: (a) Double RISs, (b) Triple RISs, and (c) Quadruple RISs.}
\label{defense1}
\end{figure*}
\label{Secured_SER_evaluation}
\textcolor{black}{The practical deployment of distributed RISs presents several challenges, including the overhead of synchronization between RIS units, the complexity of coordination, and the need for efficient control signaling. This study examines static transceiver positions, providing a useful baseline for analyzing and optimizing SER performance in both adversarial and non-adversarial conditions\footnote{\textcolor{black}{It is important to note that imperfect CSI acquisition can degrade the SER performance, especially in traditional systems with multiple distributed RISs, where obtaining accurate CSI becomes increasingly challenging due to limited operational power and higher coordination complexity. In contrast, the proposed autoencoder-based framework enables joint data encoding and RIS phase shift optimization without relying on explicit CSI \cite{10136735}, thereby reducing the feedback overhead typically required for downlink channel estimation.}}.} The SER is evaluated using an end-to-end approach with $d_1 = 200$~[m], $d_2 = 400$~[m], and $d_3 = 3$~[m]. The channel parameters are defined as $\mathrm{SC}_\mathbf{L} = \{5, 11, 19\}$, where $\mathbf{L}$ belongs to $\{\mathbf{D}_m, \mathbf{H}_m\}$ for $m = 1, 2, 3, 4$. Additionally, the Rician factor (K) is set to 4 for all links. 

Fig.~\ref{SERall}(a) shows that using multiple distributed RISs results in a significant performance improvement compared to the case of using only a single RIS. In particular, the configuration with four RISs achieves the lowest SER across all SNR values, highlighting the advantage of utilizing multiple distributed RISs to enhance signal propagation and improve transmission reliability. Specifically, at the SNR of 10~[dB], the SER of four RISs setup is approximately $10^{-5}$, which is significantly lower than three RISs setups using RIS~1,~2,~3 and RIS~1,~2,~4, at around $6 \times 10^{-5}$. At this SNR, for double-RIS configurations (RIS~1,~2 and RIS~3,~4), achieves the SER approximately $4 \times 10^{-4}$. For the single RIS, the system's SER is significantly lower than that of multiple RIS configurations. It is worth noting that the slight variations in SER among different single, double, and triple RIS combinations are attributed to the stochastic nature of the simulated dataset. Despite this, both the triple- and quadruple-RIS configurations show rapid SER reduction and achieve sufficiently low SER when the SNR exceeds 10~[dB]. 
In a double-scattering fading channel, the number of scatterers significantly impacts the system's performance \cite{7790363}. In practical scenarios, the number of scatter points varies \textcolor{black}{depending on} several factors such as building materials, obstruction geometry, and the surrounding propagation environment. For more details, we provide Fig.~\ref{SERall}(b) to show the gap of the SER between $\mathrm{SC}_\mathbf{L} =5$ and $\mathrm{SC}_\mathbf{L} =19$ scenarios for different RIS configurations. As shown, the SER gap for the system with triple RISs is larger than that of the quadruple-RIS setup. Notably, the gap observed in the quadruple-RIS case is approximately half of that in the double-RIS system. Consequently, in environments with unpredictable or harsh scattering characteristics, using additional RISs can significantly enhance communication reliability. As the $\mathbf{SC}_\mathbf{L} \to \infty$, the channel coefficients are statistically independent, \textcolor{black}{turning into} the uncorrelated Rayleigh‐fading model \cite{7790363}. To evaluate the model’s behavior in the infinity scattering regime, we provide Fig.~\ref{SERall}(c), which shows the SER performance as $\mathbf{SC}_\mathbf{L} \to \infty$. As $\mathbf{SC}_\mathbf{L}$ grows without bound, the single-RIS configuration reduces SER-minimization capability compared to $\mathbf{SC}_\mathbf{L} = 5$. The double, triple, and quadruple-RIS setups follow the same trend; however, for very large $\mathbf{SC}_\mathbf{L}$, the triple- and quadruple-RIS configurations achieve nearly the same SER reduction as observed at $\mathbf{SC}_\mathbf{L} = 5$. \textcolor{black}{Additionally, Fig.~\ref{SERall}(a) and (c) show that in the low-SNR regime, fewer scatterers concentrate the received energy along keyhole-like paths, which helps the decoder overcome noise and achieve lower SER. This observation is consistent with the theoretical analyses of 
double-scattering channels\cite{10136735,9531522}.}


\subsection{Adversarially  Attacked SER Evaluation}
\label{attack_evaluation}
The SER is evaluated with  the system's locations as the same as Section~\ref{Secured_SER_evaluation}. We fix $\mathrm{SC}_{\mathbf{L}} = 5$, representing a sparse scattering environment. Regarding RIS deployment, we consider three configurations: a double-RIS setup utilizing RIS~1 and RIS~2; a triple-RIS setup including RIS~1, RIS~2, and RIS~3; and a quadruple-RIS setup employing all four RISs. These configurations allow us to analyze the system's robustness against adversarial perturbations under a varying number of distributed RISs.

Fig.~\ref{MRMAEF}(a) illustrates Algorithm~\ref{alg:2} performance regarding maximizing the SER of the proposed system, considering K of the attack channels of $0.8$ and PSR of $ 0$~[dB]. It increases the SER of the quadruple-RIS-aid system up to approximately $8 \times 10^{-2}$, while triple- and double-RIS configurations converge to around $5 \times 10^{-2}$ and $10^{-2}$, respectively. It demonstrates that when the number of RISs is increased, despite improving baseline SER, the system becomes more vulnerable to adversarial perturbations. Fig.~\ref{MRMAEF}(b) shows that reducing the PSR decreases the effectiveness of the attack. For the two-RIS configuration, reducing PSR from $0$~[dB] to $-8$ [dB] lowers the SER from about $10^{-2}$ to $3 \times 10^{-4}$. For the quadruple‐RIS–assisted system, the SER under attack decreases from nearly $9\times10^{-2}$ at PSR $= 0$ [dB] to about $10^{-2}$ at PSR $= -8$ [dB], when SNR is $14$~[dB]. Finally, Fig. 4(c) illustrates the effect of K in the attacker’s channels. When fixing PSR of $0$~[dB] and SNR of $12$~[dB], increasing K from 0.8 to 2 raises the SER from approximately $8 \times 10^{-2}$ to $2 \times 10^{-1}$. Even at the PSR of $4$~[dB] the SER grows significantly from $5 \times 10^{-2}$ to around $6.5 \times 10^{-2}$ for K=$0.8$ and K=$2$, respectively.  This confirms that stronger dominant paths in the attack channel make the system less robust under adversarial conditions.
\begin{table}[t]
\caption{Summary of Paramenters}
\label{tab:parra}
\centering
\textcolor{black}{
\resizebox{\columnwidth}{!}{%
\begin{tabular}{|c|c|c|cc|c|}
\hline
\multicolumn{1}{|c|}{\multirow{2}{*}{Layer name}} & \multicolumn{1}{c|}{\multirow{2}{*}{\begin{tabular}[c]{@{}c@{}}Activation\\ function\end{tabular}}} & \multicolumn{2}{c|}{Parameters}                           & \multirow{2}{*}{\begin{tabular}[c]{@{}c@{}}Output \\ dimensions\end{tabular}} \\ \cline{3-4}
\multicolumn{1}{|c|}{}                            & \multicolumn{1}{c|}{}                                                                               & \multicolumn{1}{c|}{Kernel} & \multicolumn{1}{c|}{Filter} &                                                                               \\ \hline
\multicolumn{5}{|c|}{Encoder}                                                                                                                                                                                                                                                                       \\ \hline
\multicolumn{1}{|c|}{Input}                       & \multicolumn{1}{c|}{None}                                                                           & \multicolumn{2}{c|}{None}                                 & $M \times L_B$                                                                \\ \hline
\multicolumn{1}{|c|}{Conv1D + BN}                 & \multicolumn{1}{c|}{ReLU}                                                                           & \multicolumn{1}{c|}{1}      & \multicolumn{1}{c|}{256}    & $256 \times L_B$                                                              \\ \hline
\multicolumn{1}{|c|}{Conv1D + BN}                 & \multicolumn{1}{c|}{ReLU}                                                                           & \multicolumn{1}{c|}{1}      & \multicolumn{1}{c|}{256}    & $256 \times L_B$                                                              \\ \hline
\multicolumn{1}{|c|}{Conv1D + BN}                 & \multicolumn{1}{c|}{None}                                                                           & \multicolumn{1}{c|}{1}      & \multicolumn{1}{c|}{$K_e$}  & $2K_e \times L_B$                                                              \\ \hline
\multicolumn{5}{|c|}{RIS m (m = 1, 2, 3, 4)}                                                                                                                                                                                                                                                        \\ \hline
\multicolumn{1}{|c|}{Input}                       & \multicolumn{1}{c|}{None}                                                                           & \multicolumn{2}{c|}{None}                                 & $2N_m \times L_B$                                                             \\ \hline
\multicolumn{1}{|c|}{Conv1D +BN}                  & \multicolumn{1}{c|}{ReLU}                                                                           & \multicolumn{1}{c|}{1}      & \multicolumn{1}{c|}{512}    & $512 \times L_B$                                                              \\ \hline
\multicolumn{1}{|c|}{Conv1D + BN}                 & \multicolumn{1}{c|}{ReLU}                                                                           & \multicolumn{1}{c|}{1}      & \multicolumn{1}{c|}{512}    & $512 \times L_B$                                                              \\ \hline
\multicolumn{1}{|c|}{Conv1D}                      & \multicolumn{1}{c|}{None}                                                                           & \multicolumn{1}{c|}{1}      & \multicolumn{1}{c|}{$N_m$}      & $N_m \times L_B$                                                              \\ \hline
\multicolumn{5}{|c|}{Decoder}                                                                                                                                                                                                                                                                       \\ \hline
\multicolumn{1}{|c|}{Input}                       & \multicolumn{1}{c|}{None}                                                                           & \multicolumn{2}{c|}{None}                                 & $2(K_e +K_eK_d) \times L_B$                                                   \\ \hline
\multicolumn{1}{|c|}{Conv1D + BN}                 & \multicolumn{1}{c|}{ReLU}                                                                           & \multicolumn{1}{c|}{1}      & \multicolumn{1}{c|}{512}    & $512 \times L_B$                                                              \\ \hline
\multicolumn{1}{|c|}{Conv1D + BN}                 & \multicolumn{1}{c|}{ReLU}                                                                               & \multicolumn{1}{c|}{1}      & \multicolumn{1}{c|}{512}    & $512 \times L_B$                                                              \\ \hline
\multicolumn{1}{|c|}{Conv1D + BN}                 & \multicolumn{1}{c|}{Softmax}                                                                        & \multicolumn{1}{c|}{1}      & \multicolumn{1}{c|}{$M$}    & $M \times L_B$       
\\ \hline
\end{tabular}%
}}
\end{table}

\textcolor{black}{We provide Table~\ref{tab:fgm_pgd_generation} to compare the UAP generation times of MRMAEF and MRMAEP under the quadruple RISs configuration. It reveals that MRMAEP generation is significantly more time-consuming than MRMAEF, with a total runtime almost 50 times higher (25750.26~[s] versus 524.55~[s]). Additionally, the average UAP generation time per SNR for PGD exceeds 1030 [s], compared to approximately 21 [s] for MRMAEF. Although MRMAEP provides stronger adversarial perturbations, this result highlights the substantial computational overhead involved, suggesting that MRMAEF may be more suitable for time-sensitive or large-scale evaluations, while MRMAEP is better suited for thorough robustness testing.}

\subsection{Adversarial Defense Symbol Error Rate Evaluation}
\begin{figure*}[t]
	\centering
    \begin{minipage}[t]{0.31\textwidth}
	\includegraphics[width=1\textwidth]{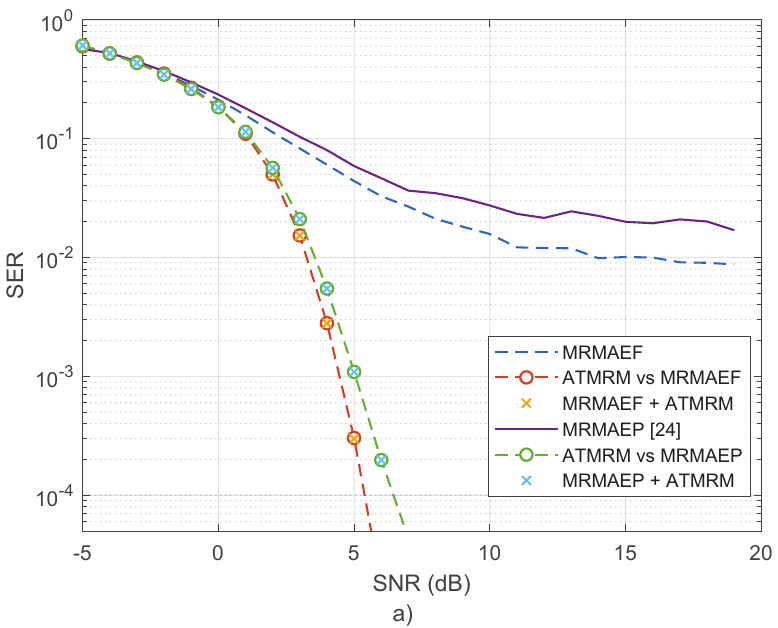}
    \end{minipage}
    \begin{minipage}[t]{0.31\textwidth}
	\includegraphics[width=1\textwidth]{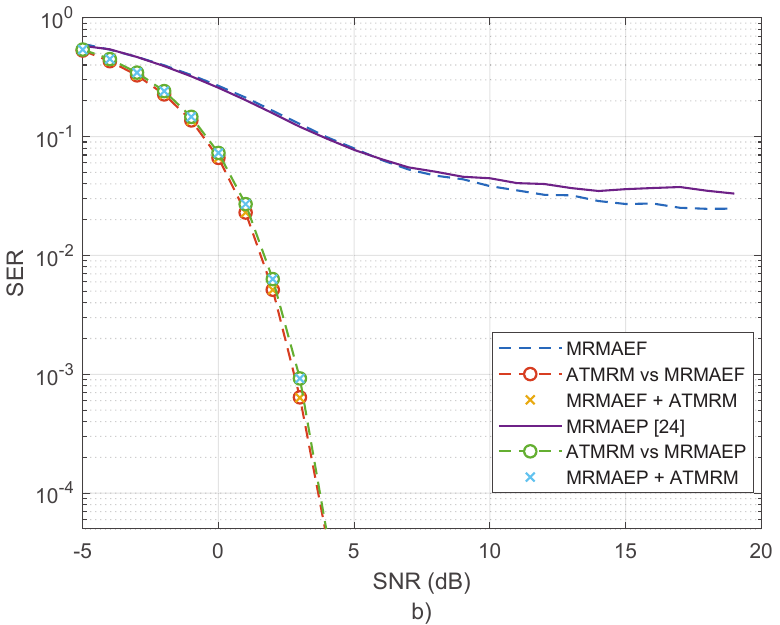}
    \end{minipage}
    \begin{minipage}[t]{0.31\textwidth}
	\includegraphics[width=1\textwidth]{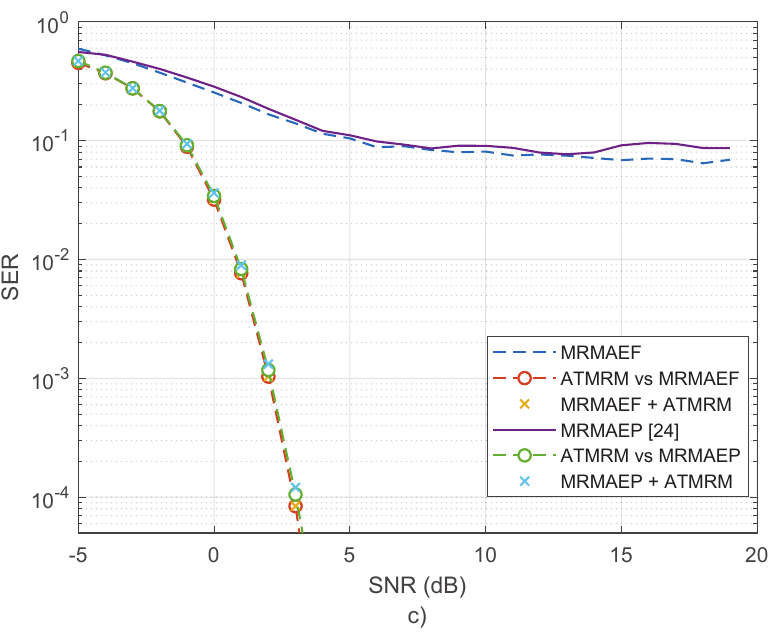}
    \end{minipage}
\caption{The SER performance of RIS-assisted MIMO systems under adversarial attacks (MRMAEF and MRMAEP) and defense (ATMRM) over double-scattering channels with 5 scatterers: (a) Double RISs, (b) Triple RISs, and (c) Quadruple RISs.}
\label{defense2}
\end{figure*}
\begin{figure*}[t]
	\centering
    \begin{minipage}[t]{0.24\textwidth}
	\includegraphics[width=1\textwidth]{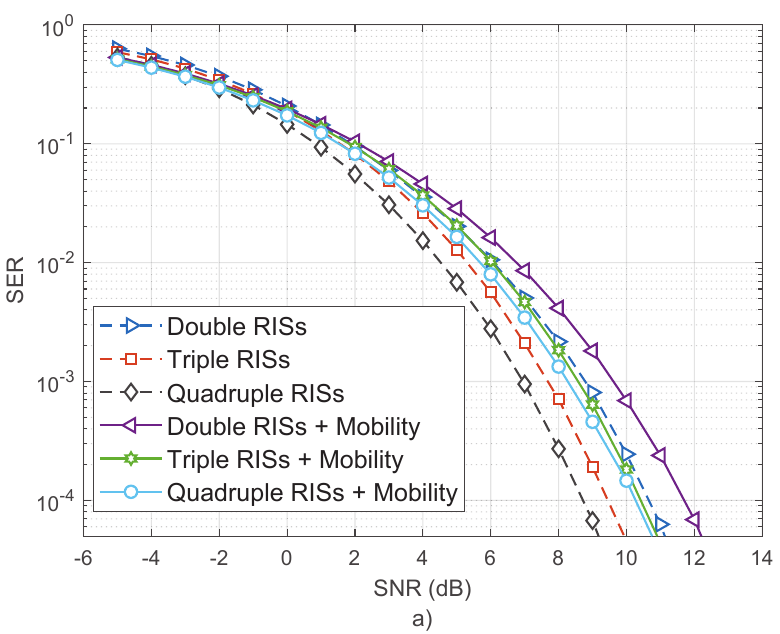}
    \end{minipage}
    \begin{minipage}[t]{0.24\textwidth}
	\includegraphics[width=1\textwidth]{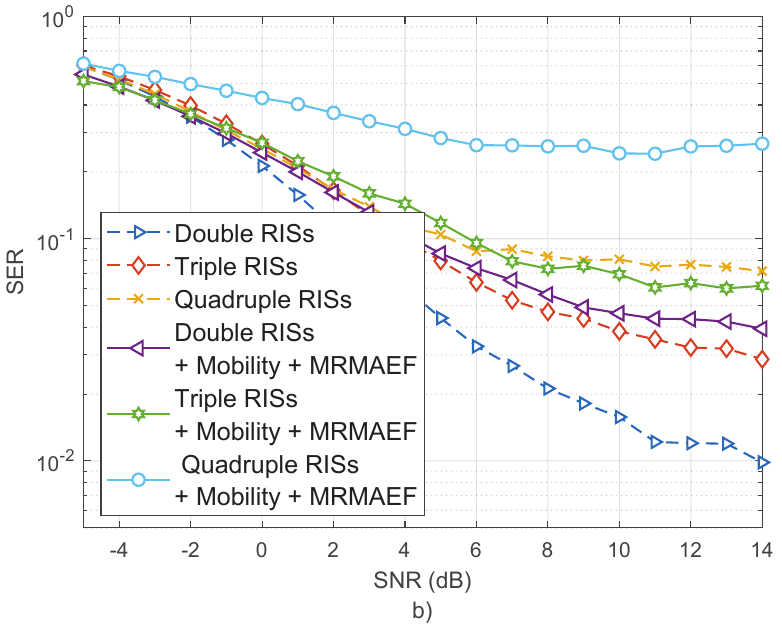}
    \end{minipage}
    \begin{minipage}[t]{0.24\textwidth}
	\includegraphics[width=1\textwidth]{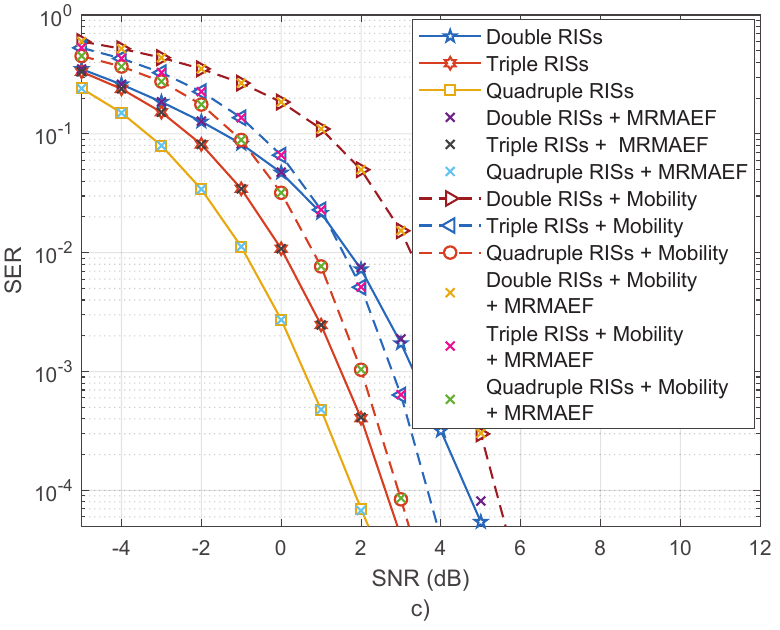}
    \end{minipage}
    \begin{minipage}[t]{0.24\textwidth}
	\includegraphics[width=1.07\textwidth]{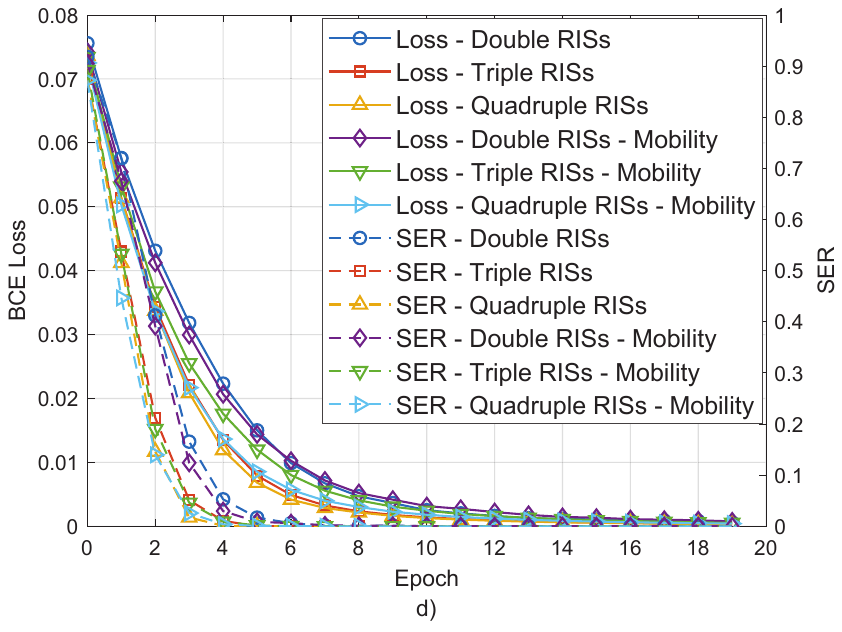}
    \end{minipage}    
\caption{\textcolor{black}{Evaluation under mobile and static scenarios with varying numbers of RISs: (a) The SER performance of the proposed model, (b) The SER performance of victim models under adversarial attacks generated by MRMAEF, (c) The SER comparison between ATMRM and non-ATMRM, and (d) Training loss and SER convergence of ATMRM.}}
\label{losss}
\end{figure*}
\label{Adversarial Defense}
The SER under adversarial defense strategies using an end-to-end learning framework is now evaluated with the number of scatterers as $\mathrm{SC}_\mathbf{L} = \{5, 11, 19\}$. The RIS deployment is the same section~\ref{attack_evaluation}. These scenarios enable us to evaluate the effectiveness of defense mechanisms against adversarial attacks with varying numbers of distributed RISs.

Fig.~\ref{defense1} compares the SER of the proposed system under the MRMAEF attack, with and without the ATMRM defense, for $\mathbf{SC}_\mathbf{L} = 5$.  Fig.~\ref{defense1}(a) demonstrates that with Algorithm~\ref{alg:training}, the system SER achieves $5 \times 10^{-5}$ at the SNR of $5.5$~[dB] versus 11 [dB] without defense, even under the effect of the MRMAEF attack. Additionally, Fig.~\ref{defense1}(b) shows that, with ATMRM, the triple‐RIS configuration under attack achieves the SER of approximately $10^{-4}$ at SNR of $4$~[dB] compared to $10$~[dB] without defense. Similarly, for the quadruple‐RIS setup in Fig.~\ref{defense1}(c), the system achieves an SER of $10^{-4}$ with the SNR of about $3$~[dB] versus $8.5$~[dB] without ATMRM. These results demonstrate that Algorithm~\ref{alg:training} is not only effective against MRMAEF but also significantly improves SER performance as the number of RISs increases. \textcolor{black}{This improvement arises from the ATMRM algorithm, which trains the autoencoder on adversarial perturbations and thereby enhances its robustness. As a result, since the model is trained with adversarial perturbations rather than only clean data, the decoder learns a more resilient mapping, which enables the SER performance under MRMAEF with ATMRM to surpass that of the model trained only on clean data.} 
Fig.~\ref{defense2}(a) shows the SER under MRMAEF and MRMAEP attacks, both with and without the proposed ATMRM defense mechanism, in the double-RIS configuration. Without defense, both attacks significantly degrade reliability and raise the SER. In this setup, ATMRM is more effective against MRMAEF than MRMAEP. Specifically, at an SNR of 5~[dB], the SER under MRMAEP is about $10^{-3}$, while under MRMAEF it is approximately $5 \times 10^{-4}$. Fig.~\ref{defense2}(b) and Fig.~\ref{defense2}(c) extend the analysis to the triple-RIS and quadruple setups. In both setups, ATMRM continues to provide strong protection and reduces the SER compared to the undefended system. The defense gap between MRMAEF and MRMAEP attacks becomes identical, i.e, the SER of ATMRM under these attacks is nearly the same. These results suggest that, in addition to enhancing spatial diversity, increasing the number of RISs improves the system’s ability to learn and defend against adversarial attacks when ATMRM is employed.
\begin{table}[t]
\centering
\caption{\textcolor{black}{UAP Generation Time: MRMAEF vs. MRMAEP under Quadruple RISs Configuration}}
\label{tab:fgm_pgd_generation}
\textcolor{black}{{\begin{tabular}{|c|c|c|}
\hline
\textbf{Metric} & \textbf{MRMAEF} & \textbf{MRMAEP} \\ 
\hline
Total Generation Time & 524.55~s & 25750.26~s \\
Average Time per SNR & 20.98~s & 1030.01~s \\
Fastest SNR & 9.52~s & 307.90~s \\
Slowest SNR & 23.54~s & 1323.50~s\\
\hline
\end{tabular}}}
\end{table}
\begin{table}[t]
\centering
\caption{\textcolor{black}{Comparison of Training Time With and Without ATMRM under Quadruple RISs Configuration}}
\label{tab:training_time_comparison}
\textcolor{black}{
{
\begin{tabular}{|c|c|c|}
\hline
\textbf{Metric} & \textbf{With ATMRM} & \textbf{Without ATMRM} \\ 
\hline
Total Training Time & 93.61~s  & 89.27~s  \\
Average Time per Epoch & 4.68~s & 4.46~s \\
Fastest Epoch & 4.42~s & 4.38~s \\
Slowest Epoch & 8.16~s & 4.81~s \\
Average Time per Step & 0.103~s & 0.096~s \\
\hline
\end{tabular}}}
\end{table}

\textcolor{black}{Table~\ref {tab:training_time_comparison} presents a detailed comparison of the training time with and without adversarial training under the quadruple RISs configuration. It illustrates that adversarial training introduces only a minor computational overhead of approximately 
$4.9\%$ in terms of total training time. The average epoch duration increases slightly from $4.46$~[s] to $4.68$~[s], indicating that the proposed adversarial training framework achieves enhanced model robustness with a negligible impact on training efficiency. The variance between the fastest and slowest epochs also remains within an acceptable range, demonstrating stable convergence behavior.}

\subsection{\textcolor{black}{Adversarial Attack and Defense Under The Doppler Effect}}
\textcolor{black}{In this subsection, we investigate the SER performance under mobility, where the destination (i.e., decoder) is in motion. The channel links from the $m^{th}$ RISs to the mobile decoder are modeled as
\begin{equation}
    \mathbf{L}_{\mathrm{d}} = \pmb{\zeta}\sqrt{\epsilon_m}\left( \sqrt{\frac{\alpha_m}{\alpha_m+1}}\mathbf{\overline{\mathbf{H}}_m} +  \sqrt{\frac{1}{\alpha_m+1}}\mathbf{\hat{\mathbf{H}}_m}\right) + \sqrt{1-(\pmb{\zeta})^2}\pmb{\xi}
\end{equation}
where $\pmb{\zeta} = J_0^2(2\pi f_D T_b)$ is the temporal correlation coefficient determined by the squared zeroth-order Bessel function of the first kind. The Doppler frequency is given by $f_D = \frac{Vf_0}{c}$, where $V$ denotes the velocity of the moving terminal (decoder), $f_0$ is the carrier frequency, $c$ is the speed of light, and $T_b$ is the symbol sampling interval. $\pmb{\xi} \sim \mathcal{CN}(0, \sigma^2\mathbf{B}_{K_d})$ denotes an independent random fading component that accounts for mobility-induced randomness.} \textcolor{black}{For the simulations, we set the decoder's velocity to $20$ [m/s] and the carrier frequency is $2.6$ [GHz], which leads to the Doppler frequency of approximately $173.3$ [Hz].}

As shown in Fig.~\ref{losss}(a), decoder mobility leads to performance degradation, where a higher SNR is required to achieve the same SER compared to the static scenario, due to channel variations induced by mobility. Fig.~\ref{losss}(b) presents the SER performance of the MRMAEF attack across different numbers of RISs under both static and mobile scenarios. In the presence of mobility, the model demonstrates increased vulnerability compared to the static case. This degradation is primarily attributed to the Doppler effect and the resulting time-varying nature of double-scattering channels. The adversary's attack capability is enhanced under mobility, as Doppler-induced channel fluctuations significantly amplify the effect of adversarial perturbations. To evaluate the performance of ATMRM under decoder mobility, Fig.~\ref{losss}(c) is presented. In the absence of ATMRM, achieving an SER of $10^{-4}$ in the double, triple, and quadruple RIS configurations requires a significantly higher SNR compared to the static scenario. In contrast, the ATMRM-enhanced model exhibits improved robustness under mobility and even achieves a lower SER than the model operating in the static case. This phenomenon can be attributed to the model’s ability to learn more effective decoding strategies in the presence of signal mismatch induced by mobility-related channel variations. Fig.~\ref{losss}(d) illustrates the training behavior of the proposed ATMRM method in terms of SER and BCE loss across different RIS configurations in both mobile and static scenarios. In all six cases, the SER and BCE loss exhibit a rapid decline during the initial training epochs and converge toward zero, indicating the effectiveness, training stability, and robustness of the proposed method, particularly with an increasing number of RISs.
\section{Conclusion}
\label{conclusion}
This paper has explored the adversarial robustness of distributed multiple RIS-assisted MIMO autoencoder systems under finite scattering environments. We first established a closed-form channel model for the aggregated link with multiple RISs and investigated the system performance in terms of SER. Our analysis revealed that while the integration of additional RISs significantly improves the baseline communication performance, it also introduces increased vulnerability to adversarial attacks. To address this, we proposed an adversarial training-based defense method tailored to the multiple distributed RIS scenario. Simulation results confirmed the effectiveness of the proposed defense approach, showing that it not only mitigates the impact of adversarial attacks, even in challenging white-box settings, but also enhances the SER in attack-free conditions. These findings highlight the scalability and robust training in building resilient intelligent communication systems. \textcolor{black}{Future research can further enhance the practicality and security of distributed RIS-assisted systems by jointly optimizing communication and control aspects. Potential directions include energy-efficient power control strategies, intelligent RIS placement and phase-shift design to improve coverage and robustness, and an in-depth study of black-box and transfer-based adversarial attacks. Such extensions will help strengthen the resilience of next-generation RIS-assisted communication networks against both known and unseen threats.}





\bibliographystyle{IEEEtran}
\bibliography{IEEE}

\begin{thebibliography}{10}
\providecommand{\url}[1]{#1}
\csname url@samestyle\endcsname
\providecommand{\newblock}{\relax}
\providecommand{\bibinfo}[2]{#2}
\providecommand{\BIBentrySTDinterwordspacing}{\spaceskip=0pt\relax}
\providecommand{\BIBentryALTinterwordstretchfactor}{4}
\providecommand{\BIBentryALTinterwordspacing}{\spaceskip=\fontdimen2\font plus
\BIBentryALTinterwordstretchfactor\fontdimen3\font minus
  \fontdimen4\font\relax}
\providecommand{\BIBforeignlanguage}[2]{{%
\expandafter\ifx\csname l@#1\endcsname\relax
\typeout{** WARNING: IEEEtran.bst: No hyphenation pattern has been}%
\typeout{** loaded for the language `#1'. Using the pattern for}%
\typeout{** the default language instead.}%
\else
\language=\csname l@#1\endcsname
\fi
#2}}
\providecommand{\BIBdecl}{\relax}
\BIBdecl

\bibitem{shi2024ris}
E.~Shi, J.~Zhang, H.~Du, B.~Ai, C.~Yuen, D.~Niyato, K.~B. Letaief, and X.~Shen,
  ``Ris-aided cell-free massive mimo systems for 6g: Fundamentals, system
  design, and applications,'' \emph{Proc. IEEE}, vol. 112, no.~4, pp. 331--364,
  2024.

\bibitem{9475160}
C.~Pan, H.~Ren, K.~Wang, J.~F. Kolb, M.~Elkashlan, M.~Chen, M.~Di~Renzo,
  Y.~Hao, J.~Wang, A.~L. Swindlehurst, X.~You, and L.~Hanzo, ``Reconfigurable
  intelligent surfaces for 6g systems: Principles, applications, and research
  directions,'' \emph{IEEE Commun. Mag.}, vol.~59, no.~6, pp. 14--20, 2021.

\bibitem{tang2020mimo}
W.~Tang, J.~Y. Dai, M.~Z. Chen, K.-K. Wong, X.~Li, X.~Zhao, S.~Jin, Q.~Cheng,
  and T.~J. Cui, ``Mimo transmission through reconfigurable intelligent
  surface: System design, analysis, and implementation,'' \emph{IEEE J. Sel.
  Areas Commun.}, vol.~38, no.~11, pp. 2683--2699, 2020.

\bibitem{khoshafa2024ris}
M.~H. Khoshafa, O.~Maraqa, J.~M. Moualeu, S.~Aboagye, T.~M. Ngatched, M.~H.
  Ahmed, Y.~Gadallah, and M.~Di~Renzo, ``Ris-assisted physical layer security
  in emerging rf and optical wireless communication systems: A comprehensive
  survey,'' \emph{IEEE Commun. Surveys Tuts.}, 2024.

\bibitem{9360873}
H.~Ye, G.~Y. Li, and B.-H. Juang, ``Deep learning based end-to-end wireless
  communication systems without pilots,'' \emph{IEEE Trans. Cogn. Commun.
  Netw.}, vol.~7, no.~3, pp. 702--714, 2021.

\bibitem{zhao2024generative}
C.~Zhao, H.~Du, D.~Niyato, J.~Kang, Z.~Xiong, D.~I. Kim, X.~Shen, and K.~B.
  Letaief, ``Generative ai for secure physical layer communications: A
  survey,'' \emph{IEEE Trans. Cogn. Commun. Netw.}, 2024.

\bibitem{8651357}
M.~Sadeghi and E.~G. Larsson, ``Physical adversarial attacks against end-to-end
  autoencoder communication systems,'' \emph{IEEE Commun. Lett.}, vol.~23,
  no.~5, pp. 847--850, 2019.

\bibitem{8449065}
------, ``Adversarial attacks on deep-learning based radio signal
  classification,'' \emph{IEEE Wireless Commun. Lett.}, vol.~8, no.~1, pp.
  213--216, 2019.

\bibitem{9838452}
Y.~Dong, H.~Wang, and Y.-D. Yao, ``A robust adversarial network-based
  end-to-end communications system with strong generalization ability against
  adversarial attacks,'' in \emph{Proc. IEEE Int. Conf. Commun.}, 2022, pp.
  4086--4091.

\bibitem{zheng2023designing}
K.~Zheng and X.~Ma, ``Designing learning-based adversarial attacks to (mimo-)
  ofdm systems with adaptive modulation,'' \emph{IEEE Trans. Wireless Commun.},
  vol.~22, no.~9, pp. 6241--6251, 2023.

\bibitem{nan2023physical}
G.~Nan, Z.~Li, J.~Zhai, Q.~Cui, G.~Chen, X.~Du, X.~Zhang, X.~Tao, Z.~Han, and
  T.~Q. Quek, ``Physical-layer adversarial robustness for deep learning-based
  semantic communications,'' \emph{IEEE J. Sel. Areas Commun.}, vol.~41, no.~8,
  pp. 2592--2608, 2023.

\bibitem{10138665}
P.~F. de~Araujo-Filho, G.~Kaddoum, M.~Chiheb Ben~Nasr, H.~F. Arcoverde, and
  D.~R. Campelo, ``Defending wireless receivers against adversarial attacks on
  modulation classifiers,'' \emph{IEEE Internet Things J.}, vol.~10, no.~21,
  pp. 19\,153--19\,162, 2023.

\bibitem{10402044}
Z.~Wang, W.~Liu, and H.-M. Wang, ``Wireless universal adversarial attack and
  defense for deep learning-based modulation classification,'' \emph{IEEE
  Commun. Lett.}, vol.~28, no.~3, pp. 582--586, 2024.

\bibitem{10416752}
A.~Ghasemi, E.~Zeraatkar, M.~Moradikia, and S.~Zekavat, ``Adversarial attacks
  on graph neural networks based spatial resource management in p2p wireless
  communications,'' \emph{IEEE Trans. Veh. Technol.}, vol.~73, no.~6, pp.
  8847--8863, 2024.

\bibitem{10742079}
B.~D. Son, N.~N. Khanh, T.~V. Chien, and D.~I. Kim, ``Adversarial attacks
  against double ris-assisted mimo systems-based autoencoder in
  finite-scattering environments,'' \emph{IEEE Wireless Commun. Lett.}, pp.
  1--1, 2024.

\bibitem{10830521}
J.~Shi, Q.~Zhang, W.~Zeng, S.~Li, and Z.~Qin, ``Secure transmission in wireless
  semantic communications with adversarial training,'' \emph{IEEE Commun.
  Lett.}, vol.~29, no.~3, pp. 487--491, 2025.

\bibitem{10436107}
W.~Zhang, M.~Krunz, and G.~Ditzler, ``Stealthy adversarial attacks on machine
  learning-based classifiers of wireless signals,'' \emph{IEEE Trans. Mach.
  Learn. Commun. Netw.}, vol.~2, pp. 261--279, 2024.

\bibitem{wu2021intelligent}
Q.~Wu, S.~Zhang, B.~Zheng, C.~You, and R.~Zhang, ``Intelligent reflecting
  surface-aided wireless communications: A tutorial,'' \emph{IEEE Trans.
  Commun.}, vol.~69, no.~5, pp. 3313--3351, 2021.

\bibitem{pan2020multicell}
C.~Pan, H.~Ren, K.~Wang, W.~Xu, M.~Elkashlan, A.~Nallanathan, and L.~Hanzo,
  ``Multicell mimo communications relying on intelligent reflecting surfaces,''
  \emph{IEEE Trans. Wireless Commun.}, vol.~19, no.~8, pp. 5218--5233, 2020.

\bibitem{zhang2021intelligent}
S.~Zhang and R.~Zhang, ``Intelligent reflecting surface aided multi-user
  communication: Capacity region and deployment strategy,'' \emph{IEEE Trans.
  Commun.}, vol.~69, no.~9, pp. 5790--5806, 2021.

\bibitem{mei2022intelligent}
W.~Mei, B.~Zheng, C.~You, and R.~Zhang, ``Intelligent reflecting surface-aided
  wireless networks: From single-reflection to multireflection design and
  optimization,'' \emph{Proc. IEEE}, vol. 110, no.~9, pp. 1380--1400, 2022.

\bibitem{yaswanth2023robust}
J.~Yaswanth, M.~Katwe, K.~Singh, S.~Prakriya, and C.~Pan, ``Robust beamforming
  design for active-ris aided mimo swipt communication system: A power
  minimization approach,'' \emph{IEEE Trans. Wireless Commun.}, vol.~23, no.~5,
  pp. 4767--4785, 2023.

\bibitem{papazafeiropoulos2023max}
A.~Papazafeiropoulos, P.~Kourtessis, and S.~Chatzinotas, ``Max-min sinr
  analysis of star-ris assisted massive mimo systems with hardware
  impairments,'' \emph{IEEE Trans. Wireless Commun.}, vol.~23, no.~5, pp.
  4255--4268, 2023.

\bibitem{zhang2022sum}
H.~Zhang, S.~Ma, Z.~Shi, X.~Zhao, and G.~Yang, ``Sum-rate maximization of
  ris-aided multi-user mimo systems with statistical csi,'' \emph{IEEE Trans.
  Wireless Commun.}, vol.~22, no.~7, pp. 4788--4801, 2022.

\bibitem{choi2024wmmse}
H.~Choi, A.~L. Swindlehurst, and J.~Choi, ``Wmmse-based rate maximization for
  ris-assisted mu-mimo systems,'' \emph{IEEE Trans. Commun.}, 2024.

\bibitem{zhang2025sum}
C.~Zhang, W.~U. Khan, A.~K. Bashir, A.~K. Dutta, A.~U. Rehman, and M.~M.
  Al~Dabel, ``Sum rate maximization for 6g beyond diagonal ris-assisted
  multi-cell transportation systems,'' \emph{IEEE Trans. Intell. Transp. Syst},
  2025.

\bibitem{9745781}
H.~Jiang, L.~Dai, M.~Hao, and R.~MacKenzie, ``End-to-end learning for ris-aided
  communication systems,'' \emph{IEEE Trans. Veh. Technol.}, vol.~71, no.~6,
  pp. 6778--6783, 2022.

\bibitem{10120965}
H.-Y. Chen, M.-H. Wu, T.-W. Yang, C.-W. Huang, and C.-F. Chou,
  ``Attention-aided autoencoder-based channel prediction for intelligent
  reflecting surface-assisted millimeter-wave communications,'' \emph{IEEE
  Trans. Green Commun. Netw.}, vol.~7, no.~4, pp. 1906--1919, 2023.

\bibitem{wu2025intelligent}
M.-H. Wu, H.-Y. Chen, T.-W. Yang, C.-C. Hsu, C.-W. Huang, and C.-F. Chou,
  ``Intelligent reflecting surface-assisted millimeter wave communications:
  Cross attention-aided variational autoencoder-based precoding design,''
  \emph{IEEE Trans. Cogn. Commun. Netw.}, 2025.

\bibitem{9977540}
H.~A. Le, T.~Van~Chien, V.~D. Nguven, and W.~Choi, ``Ris-assisted mimo
  communication systems: Model-based versus autoencoder approaches,'' in
  \emph{Proc. IEEE 33rd Annu. Int. Symp. Pers., Indoor Mobile Radio Commun.
  (PIMRC)}, 2022, pp. 1--6.

\bibitem{10136735}
H.~An~Le, T.~Van~Chien, V.~D. Nguyen, and W.~Choi, ``Double ris-assisted mimo
  systems over spatially correlated rician fading channels and finite
  scatterers,'' \emph{IEEE Trans. Commun.}, vol.~71, no.~8, pp. 4941--4956,
  2023.

\bibitem{10460991}
B.~D. Son, N.~T. Hoa, T.~V. Chien, W.~Khalid, M.~A. Ferrag, W.~Choi, and
  M.~Debbah, ``Adversarial attacks and defenses in 6g network-assisted iot
  systems,'' \emph{IEEE Internet Things J.}, vol.~11, no.~11, pp.
  19\,168--19\,187, 2024.

\bibitem{8811733}
Q.~Wu and R.~Zhang, ``Intelligent reflecting surface enhanced wireless network
  via joint active and passive beamforming,'' \emph{IEEE Trans. Wireless
  Commun.}, vol.~18, no.~11, pp. 5394--5409, 2019.

\bibitem{9531522}
T.~Van~Chien, H.~Q. Ngo, S.~Chatzinotas, B.~Ottersten, and M.~Debbah, ``Uplink
  power control in massive mimo with double scattering channels,'' \emph{IEEE
  Trans. Wireless Commun.}, vol.~21, no.~3, pp. 1989--2005, 2022.

\bibitem{mailloux2017phased}
R.~J. Mailloux, \emph{Phased array antenna handbook}.\hskip 1em plus 0.5em
  minus 0.4em\relax Artech house, 2017.

\bibitem{1175470}
D.~Gesbert, H.~Bolcskei, D.~Gore, and A.~Paulraj, ``Outdoor mimo wireless
  channels: models and performance prediction,'' \emph{IEEE Trans. Commun.},
  vol.~50, no.~12, pp. 1926--1934, 2002.

\bibitem{jiang2019turbo}
Y.~Jiang, H.~Kim, H.~Asnani, S.~Kannan, S.~Oh, and P.~Viswanath, ``Turbo
  autoencoder: Deep learning based channel codes for point-to-point
  communication channels,'' \emph{Proc. Adv. Neural Inf. Process. Syst.},
  vol.~32, 2019.

\bibitem{kiranyaz20211d}
S.~Kiranyaz, O.~Avci, O.~Abdeljaber, T.~Ince, M.~Gabbouj, and D.~J. Inman, ``1d
  convolutional neural networks and applications: A survey,'' \emph{Mech. Syst.
  Signal Process.}, vol. 151, p. 107398, 2021.

\bibitem{carlini2017towards}
N.~Carlini and D.~Wagner, ``Towards evaluating the robustness of neural
  networks,'' in \emph{Proc. IEEE Symp. Secur. Privacy (SP)}.\hskip 1em plus
  0.5em minus 0.4em\relax Ieee, 2017, pp. 39--57.

\bibitem{sutton2024adversarial}
O.~J. Sutton, Q.~Zhou, I.~Y. Tyukin, A.~N. Gorban, A.~Bastounis, and D.~J.
  Higham, ``How adversarial attacks can disrupt seemingly stable accurate
  classifiers,'' \emph{Neural Networks}, vol. 180, p. 106711, 2024.

\bibitem{goodfellow2014explaining}
I.~J. Goodfellow, J.~Shlens, and C.~Szegedy, ``Explaining and harnessing
  adversarial examples,'' \emph{arXiv preprint arXiv:1412.6572}, 2014.

\bibitem{3GPP_TR36_814}
``Further advancements for e-utra physical layer aspects,'' 3rd Generation
  Partnership Project (3GPP), Tech. Rep. TR 36.814 V9.0.0, March 2010, release
  9.

\bibitem{kingma2014adam}
D.~P. Kingma and J.~Ba, ``Adam: A method for stochastic optimization,''
  \emph{arXiv preprint arXiv:1412.6980}, 2014.

\bibitem{7790363}
T.~Van~Chien, E.~Björnson, and E.~G. Larsson, ``Multi-cell massive mimo
  performance with double scattering channels,'' in \emph{Proc. IEEE CAMAD},
  2016, pp. 231--236.

\end{thebibliography}

\end{document}